\documentclass[11pt,letterpaper]{article}
\pdfoutput=1
\usepackage[T1]{fontenc}
\usepackage[ascii]{inputenc}
\usepackage{lmodern}
\usepackage[dvipsnames,svgnames]{xcolor}
\usepackage{amsmath,amssymb,amsthm,mathtools,enumitem,mleftright}
\usepackage[margin=0.9in]{geometry}
\usepackage[bookmarksnumbered,linktocpage,hypertexnames=false,colorlinks=true,linkcolor=NavyBlue,urlcolor=NavyBlue,citecolor=ForestGreen,anchorcolor=green,breaklinks=true,pagebackref=false,pdfusetitle]{hyperref}
\usepackage[capitalize,nameinlink]{cleveref}
\usepackage[english]{babel}
\newcommand{\ket}[1]{\lvert #1\rangle}
\newcommand{\bra}[1]{\langle #1\rvert}
\newcommand{\ope}{\mathrm{op}}
\newcommand{\ST}{\mathrm{ST}}
\newcommand{\PST}{P_{\ST}}
\newcommand{\SST}{S_{\ST}}
\newcommand{\mnorm}[1]{\lVert #1\rVert_{\max}}
\newcommand{\transport}{\mathcal{T}}
\newcommand{\qtransport}{\mathcal{C}}
\newcommand{\PCST}{P_{\mathrm{CST}}}
\newcommand{\SCST}{S_{\mathrm{CST}}}
\newcommand{\RR}{\mathbb{R}}
\newcommand{\C}{\mathbb{C}}
\newcommand{\R}{\mathbb{R}}
\newcommand{\KL}{D_{\mathrm{KL}}}
\newcommand{\dTV}{d_{\mathrm{TV}}}
\newcommand{\eps}{\varepsilon}
\newcommand{\ot}{\otimes}
\newcommand{\op}{\oplus}
\newcommand{\ketbra}[2]{\ket{#1}\!\bra{#2}}
\newcommand{\braket}[1]{\langle#1\rangle}
\newcommand{\proj}[1]{\ket{#1}\!\bra{#1}}

\newtheorem*{theorem*}{Theorem}
\newtheorem{theorem}{Theorem}[section]
\crefname{theorem}{Theorem}{Theorems}
\newtheorem{lemma}[theorem]{Lemma}
\crefname{lemma}{Lemma}{Lemmas}
\AddToHook{env/lemma/begin}{\crefalias{theorem}{lemma}}

\crefname{proposition}{Proposition}{Propositions}
\AddToHook{env/proposition/begin}{\crefalias{theorem}{proposition}}

\crefname{corollary}{Corollary}{Corollaries}
\AddToHook{env/corollary/begin}{\crefalias{theorem}{corollary}}
\theoremstyle{definition}
\newtheorem{definition}{Definition}[section]
\crefname{definition}{Definition}{Definitions}
\AddToHook{env/definition/begin}{\crefalias{theorem}{definition}}
\theoremstyle{remark}
\newtheorem*{remark}{Remark}
\crefname{remark}{Remark}{Remarks}
\AddToHook{env/remark/begin}{\crefalias{theorem}{remark}}
\crefname{equation}{Eq.}{Eqs.}
\numberwithin{equation}{section}
\DeclarePairedDelimiter\abs{\lvert}{\rvert}
\DeclarePairedDelimiter\norm{\lVert}{\rVert}

\DeclarePairedDelimiter\parens{\lparen}{\rparen}
\DeclarePairedDelimiter\braces{\lbrace}{\rbrace}

\DeclareMathOperator{\Tr}{Tr}

\DeclareMathOperator*{\argmin}{\arg\min}

\DeclareMathOperator{\diag}{diag}

\begin{document}

\title{Robustness of Hidden-Variable Theories and Matrix Scaling}
\author{Giulio Malavolta\thanks{Bocconi University, \href{mailto:giulio.malavolta@unibocconi.it}{giulio.malavolta@unibocconi.it}}
\and Harold Nieuwboer\thanks{University of Copenhagen, \href{mailto:hani@math.ku.dk}{hani@math.ku.dk}}
\and Akshay Ramachandran\thanks{University of British Columbia, \href{mailto:aramach@cs.ubc.ca}{aramach@cs.ubc.ca}}
\and Michael Walter\thanks{LMU Munich \& MCQST, \href{mailto:michael.walter@lmu.de}{michael.walter@lmu.de}}}
\date{}
\maketitle
\begin{abstract}
Motivated by quantum foundations and complexity theory, Aaronson formalized a hidden-variable theory inspired by a proposal by Schr\"odinger.
To any quantum state and unitary, this Schr\"odinger theory assigns a joint probability distribution via the Sinkhorn algorithm:
rescale the columns and rows of the entrywise modulus of the unitary so that the marginals match the Born rule for the initial and final quantum states, respectively.
He conjectured that this map is \emph{robust}, i.e., inverse polynomially small perturbations in the inputs lead to inverse polynomially small perturbations of the joint distribution, which is important for complexity theoretic applications.
We present a counterexample to this conjecture:
We construct a pure state and two unitaries that are exponentially close, yet the joint probability distributions assigned by the Schr\"odinger theory differ by at least an inverse-linear term in at least one entry, and by a constant in total variation distance.
We also propose a modified version of Schr{\"o}dinger's theory that satisfies robustness, while retaining all of its other desirable properties.
This yields a complete picture of which axioms of Aaronson can be simultaneously satisfied by hidden-variable theories.
\end{abstract}

\section{Introduction}
Quantum mechanics offers a mathematical description of quantum systems in nature:
The state of a system is described by a density matrix~$\rho$ and the time evolution of a closed system for a fixed time interval is described by conjugation with a unitary matrix~$U$.
According to the \emph{Born rule}, the probability of observing the system $\rho$ in the $i$-th basis state is given by~$\rho_{ii}$.
An unintuitive aspect of quantum mechanics is that it does not assign a definite value to a chosen observable at all times (nor to multiple arbitrary observables at the same time).
\emph{Hidden-variable theories} aim to overcome this by introducing additional inaccessible variables at the expense of giving up on other desirable properties of a physical theory.
There is a large body of work on hidden-variable theories, going back all the way to the early days of quantum mechanics; see, e.g., the review by Mermin~\cite{mermin1993hidden}.

Aaronson \cite{Aaronson,AaronsonPreprint,qcsd} proposes an \emph{axiomatic} view of hidden-variable theories: He postulates a series of desirable properties and investigates which of them can be simultaneously satisfied (more on this in \cref{sec:hiddenv}).
Among others, Aaronson studies a hidden-variable theory he calls \emph{Schr\"odinger theory} as it goes back to a remarkable proposal by Schr{\"o}dinger~\cite{Schrodinger1931}: Given a density matrix $\rho$ and a unitary matrix~$U$, a joint distribution is computed by rescaling the rows and columns of~$\abs U=(\abs{U_{ji}})_{ji}$ so that the column marginals coincide with $\rho_{ii}$ and the row marginals with $(U\rho U^\dagger)_{jj}$; one then obtains a stochastic transition map by dividing by the column marginals (with a suitable limiting procedure for zero marginals).

As a main open problem, Aaronson asks whether this Schr{\"o}dinger theory is \emph{robust}.
This question can be understood in terms of stability of the Sinkhorn algorithm:%
\footnote{In fact, the question is more delicate than ordinary stability, since the input matrix and the target row and column marginals are correlated in the application in the Schr\"odinger theory.}
Starting from two inputs $(U_0,\rho_0)$ and $(U_1, \rho_1)$ that are close to each other, will the corresponding scaled matrices also be close to each other?
In other words, is the map induced by the Sinkhorn algorithm robust to small perturbations of the input?
Or can it output matrices that have distance exponentially larger in the dimension?
Besides being a question of independent interest on matrix scaling, Aaronson shows that, if one could prove the desired robustness for Schr{\"o}dinger's theory, then one would obtain a complete characterization of which of his axioms can be simultaneously satisfied.

In this paper, we (1)~disprove Aaronson's conjecture on the robustness of Schr\"odinger theory, (2)~show that a natural modification of the theory salvages robustness, while still satisfying symmetry, indifference, and distributional product commutativity, and (3)~show new limitations on combinations of axioms.
Together, this gives a complete picture as to which combinations of Aaronson's axioms can be simultaneously satisfied.

\subsection{Background on Aaronson's Axioms for Hidden-Variable Theories}\label{sec:hiddenv}
In Aaronson's formulation, a \emph{hidden-variable theory} is a map~$S$ that, on input a density matrix~$\rho$ and a unitary matrix~$U$ in any dimension~$N$, outputs a column-stochastic \emph{transition matrix}~$S(\rho, U)$.
Define the corresponding joint distribution by $P(\rho, U) \coloneqq S(\rho, U)\cdot\mathrm{diag}(\rho_{11}, \dots, \rho_{NN})$.
We require~that:
\begin{equation}\label{eq:marginal}
    \sum_{j} P(\rho, U)_{ji} \stackrel!= \rho_{ii} \qquad\text{and}\qquad
    \sum_{i} P(\rho, U)_{ji} = \sum_{i} S(\rho, U)_{ji} \cdot \rho_{ii} \stackrel!= (U\rho U^\dagger)_{jj}.
\end{equation}
In other words, the marginals of the joint distribution should equal the \emph{Born marginals} (defined as the collection of diagonal entries) of the initial and final states.
The second condition in~\eqref{eq:marginal} is a genuine requirement; the first condition already follows from the stochasticity of the transition matrix.
Aaronson considers the following axioms as desirable properties for a hidden-variable theory~\cite{Aaronson,AaronsonPreprint}:
\begin{enumerate}
\item \emph{Symmetry:}
This axiom states that the theory is invariant under relabeling the basis states. Formally,
$S(\rho,U) = Q S(Q^{-1}\rho Q,Q^{-1}UQ) Q^{-1}$
for all $\rho,U$ and all permutation matrices~$Q$.
\item \emph{Indifference:}
This axiom states that $S(\rho,U)$ preserves the block structure of~$U$.
In particular, if $U$ is block diagonal then so must be $S(\rho, U)$, with the same block structure (or a refinement thereof).
Formally, a pair~$(I,J)$ of nonempty subsets of $\{1,\dots,N\}$ is called a \emph{block} of an $N \times N$ matrix~$M$ if $M_{ji}=0$ whenever~$i\in I, j\not\in J$ or~$i\not\in I, j\in J$ (for unitaries, this implies that~$\abs I = \abs J$).
The indifference axiom requires that any block of $U$ is also a block of~$S(\rho, U)$.%
\footnote{Here we adopt the definition of indifference in \cite{AaronsonPreprint}.
In~\cite{Aaronson}, this is only required for ``diagonal'' blocks, i.e., blocks of the form~$(I,I)$.
The product theory does not satisfy this weaker requirement either.
\Cref{thm:scott a,thm:scott b,thm:decompinvcom,thm:productcom} remain valid with this weaker notion of indifference.}

This can also be phrased in terms of so-called minimal blocks.
A block~$(I,J)$ of~$U$ is \emph{minimal} if no other block $(I',J')$ has $I' \subsetneq I$ or $J' \subsetneq J$.
The set of all minimal blocks of $U$ partitions the column and row indices.
A theory satisfies the indifference axiom if and only if $S(\rho, U)_{ji}=0$ whenever~$i$ and~$j$ belong to different minimal blocks of $U$.

\item \emph{Robustness:} This axiom states that, for every positive polynomial~$r$, there is a positive polynomial~$s$ such that the following holds for all pairs of density matrices~$\rho_0$, $\rho_1$ and unitaries~$U_0$, $U_1$ in every dimension~$N$:
\[
 \mnorm{\rho_0-\rho_1},\ \mnorm{ U_0-U_1}
 \le \frac1{s(N)}
 \quad\Longrightarrow\quad
 \mnorm{P(\rho_0, U_0)-P(\rho_1,U_1)}
 \le \frac1{r(N)}.
\]
Here, we use the entrywise maximum norm $\mnorm{M} \coloneqq \max_{j,i} \abs{M_{ji}}$.

\emph{Block Robustness:} This is a weaker variant of the above axiom.
It requires the above condition to hold only if $U_0$ and $U_1$ have the same minimal blocks (defined in the preceding point).

\item \emph{Decomposition Invariance:}
This axiom states that for any convex decomposition of $\rho$ into pure states, $\rho=\sum_k\lambda_k \ket{\psi_k}\bra{\psi_k}$, and for every unitary~$U$, it holds that
\[
     S(\rho,U) = \sum_k \lambda_k S(\ket{\psi_k}\bra{\psi_k},U).
\]

\item \emph{Commutativity:}
Let $A,B$ be positive integers and $N=AB$.
Identify $\C^N = \C^A \ot \C^B$ using the lexicographically-ordered product basis (so $\ot$ coincides with the Kronecker product of matrices).
For every state $\rho$ and any pair of local unitaries~$\mathcal U=U_A\otimes I_B$ and~$\mathcal V=I_A\otimes U_B$, the commutativity axiom requires that
  \begin{equation}\label{eq:oldcomm}
 S(\mathcal U\rho\mathcal U^\dagger,\mathcal V)S(\rho,\mathcal U)
 =S(\mathcal V\rho\mathcal V^\dagger,\mathcal U)S(\rho,\mathcal V).
 \end{equation}

\emph{Product Commutativity:}
This is a weaker variant of the above axiom.
It requires \cref{eq:oldcomm} to hold only for pure product states, i.e., $\rho = \ket\psi\bra\psi$ where $\ket\psi = \ket{\psi_A} \ot \ket{\psi_B}$.
\end{enumerate}
We also consider natural weakenings of the above commutativity properties, defined by comparing the joint distributions rather than the transition matrices.
They are implicit in Aaronson's work; indeed some of his theorems apply with this weaker condition instead (see below and \cref{tab:comparison}).
\begin{enumerate}
\item[5'.] \emph{Distributional Commutativity}  and \emph{Distributional Product Commutativity:}
These are defined as above, but with \cref{eq:oldcomm} replaced by the analogous condition for the joint distributions:
\begin{equation}\label{eq:propercomm}
    S(\mathcal U\rho\mathcal U^\dagger,\mathcal V)
    P(\rho,\mathcal U)
= S(\mathcal V\rho\mathcal V^\dagger,\mathcal U)
    P(\rho,\mathcal V).
\end{equation}
This condition is equivalent to~\cref{eq:oldcomm} if~$\rho_{ii}>0$ for all~$i$, but otherwise weaker, as it is obtained from~\cref{eq:oldcomm} by multiplication with the matrix $\mathrm{diag}(\rho_{11}, \dots, \rho_{NN})$.
\end{enumerate}

Aaronson discusses several constructions of hidden-variable theories and the axioms that they satisfy.
A simple \emph{Product Theory} satisfies symmetry, robustness, decomposition invariance and commutativity, but not indifference.
A variant by Dieks~\cite{Dieks1994Modal} restores indifference, but only satisfies block robustness and distributional product commutativity; here we observe that it is not decomposition invariant (\cref{thm:decompinvcom}).
A \emph{Flow Theory} satisfies symmetry, indifference and robustness, but neither decomposition invariance nor commutativity.
Finally, Aaronson proposes the aforementioned \emph{Schr\"odinger Theory}.
He proves that it satisfies symmetry, indifference, and distributional product commutativity, but not decomposition invariance; robustness is left as a main open problem.

None of these theories satisfy all desired axioms simultaneously.
To explain this, Aaronson also proves the following fundamental limitations.

\begin{theorem}[{\cite[Thm.~1]{Aaronson}}]
\label{thm:scott a}
No hidden-variable theory satisfies both indifference and distributional commutativity.
\end{theorem}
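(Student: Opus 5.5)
The plan is to refute the conjunction directly: exhibit one two-qubit instance ($A=B=2$, $N=4$) on which indifference completely determines the \emph{supports} of every transition involved, and on which these supports are incompatible with \cref{eq:propercomm}. Take $U_A=U_B=H$ (the Hadamard matrix), so $\mathcal U=H\otimes I$ and $\mathcal V=I\otimes H$. Index basis states by pairs $(a,b)$.

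The minimal blocks of $\mathcal U$ are $\{(0,b),(1,b)\}$ for $b=0,1$. Indifference therefore forces $S(\cdot,\mathcal U)_{(a',b'),(a,b)}=0$ whenever $b'\neq b$. Symmetrically, $S(\cdot,\mathcal V)$ can only move $(a,b)\to(a,b')$. Consequently:
\begin{itemize}
\item the left side of \cref{eq:propercomm}, with initial state $(a,b)$, only transports mass along paths $(a,b)\to(a',b)\to(a',b')$;
\item the right side only transports mass along paths $(a,b)\to(a,b')\to(a',b')$.
\end{itemize}
In addition, any intermediate state whose Born probability vanishes carries no mass. This holds because the joint distribution $P(\rho,\mathcal U)$ has row marginals $(\mathcal U\rho\mathcal U^\dagger)_{jj}$ by \cref{eq:marginal}.

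Choose the Hardy-type state $\ket\psi=(\ket{00}+\ket{01}+\ket{10})/\sqrt3$ and set $\rho=\proj\psi$. A direct computation gives the following amplitudes.
\begin{itemize}
\item $(H\otimes I)\ket\psi$ has zero amplitude on $(1,0)$; its amplitudes are $2/\sqrt6,\,1/\sqrt6,\,0,\,1/\sqrt6$ on $00,01,10,11$.
\item By symmetry, $(I\otimes H)\ket\psi$ has zero amplitude on $(0,1)$.
\item $(H\otimes H)\ket\psi$ has amplitude $-1/(2\sqrt3)\neq0$ on $(1,1)$.
\end{itemize}
Both sides of \cref{eq:propercomm} are distributions on (initial, final) pairs. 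Summing over initial states, each has final marginal equal to the Born marginal of $\mathcal V\mathcal U\rho\,\mathcal U^\dagger\mathcal V^\dagger=\mathcal U\mathcal V\rho\,\mathcal V^\dagger\mathcal U^\dagger$ (the local unitaries commute). In particular, both assign total mass $1/12>0$ to final state $(1,1)$.

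Now trace where that mass can come from.
\begin{itemize}
\item \emph{Left side.} A path into $(1,1)$ must pass through the intermediate state $(1,b)$ and then move to $(1,1)$ under the $\mathcal V$ step. Since $(1,0)$ has zero weight after $\mathcal U$, the intermediate must be $(1,1)$. The $\mathcal U$ step preserves $b$, so the initial state has $b=1$. As $\psi_{11}=0$, the only possibility is initial state $(0,1)$. Hence the left side puts mass $1/12$ on $((0,1)\to(1,1))$ and zero on $((1,0)\to(1,1))$.
\item \emph{Right side.} The symmetric argument, using that $(0,1)$ is dead after $\mathcal V$, puts all this mass on initial state $(1,0)$.
\end{itemize}
So the two joint distributions in \cref{eq:propercomm} differ, contradicting distributional commutativity.

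The main obstacle is finding such an instance. Symmetric, maximally entangled choices such as $(\ket{00}+\ket{11})/\sqrt2$ with Hadamards turn out to be consistent. What is needed is a state whose two intermediate distributions each have a zero in a position that blocks exactly one of the two paths into some final outcome of positive probability. Once the Hardy-type state is chosen, the remaining work is just the amplitude computations above, together with the observation that these support arguments use only indifference and the marginal conditions. In particular, no further property of $S$ is needed.
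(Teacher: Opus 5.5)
Your proof is correct. The amplitudes check out: $(H\otimes I)\ket\psi=(2\ket{00}+\ket{01}+\ket{11})/\sqrt6$, $(I\otimes H)\ket\psi=(2\ket{00}+\ket{10}+\ket{11})/\sqrt6$, and $(H\otimes H)\ket\psi=(3\ket{00}+\ket{01}+\ket{10}-\ket{11})/\sqrt{12}$. Your path-tracing then goes through. Indifference confines each step to the minimal blocks of $H\otimes I$ and $I\otimes H$, respectively. Rows of $P(\rho,\cdot)$ with vanishing Born weight are zero, and columns with $\rho_{ii}=0$ are killed by the diagonal factor. Together these force the entry in row $(1,1)$, column $(0,1)$ of the left-hand side of \cref{eq:propercomm} to be $1/12$, while the same entry on the right-hand side is $0$. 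Since the minimal blocks you use are all diagonal blocks $(I,I)$, the argument even works with the weaker notion of indifference mentioned in the paper's footnote.

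The paper does not prove this statement itself; it imports it from Aaronson. The closest comparison is its proof of \cref{thm:productcom}, which runs on the same mechanism: minimal blocks plus vanishing Born marginals force the supports of the histories. The difference is instructive. That proof exploits the stronger transition-matrix form of commutativity by reading off the column $\ket6$, which has $\rho_{66}=0$. This is what allows it to use a \emph{product} state, at the cost of dimension $12$ and two different factorizations $\C^2\otimes\C^6$ and $\C^3\otimes\C^4$ that force contradictory values of $p$. In the distributional form, such zero-weight columns are multiplied by zero and carry no information. Moreover, the capped Schr\"odinger theory is indifferent and satisfies distributional product commutativity (\cref{thm:yay}), so no counterexample built from a product state can exist. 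An entangled state like your Hardy state is therefore unavoidable. With it, the contradiction lives on the two initial states $(0,1)$ and $(1,0)$ of positive weight, which gives a compact four-dimensional proof that uses nothing beyond support information.
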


\begin{theorem}[{\cite[Thm.~2~(i)]{Aaronson}}]
\label{thm:scott b}
No hidden-variable theory satisfies indifference, robustness, and decomposition invariance.
\end{theorem}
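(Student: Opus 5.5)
The plan is to argue by contradiction in a fixed small dimension ($N=2$, moving to $N=3$ if needed). Because $N$ is fixed there, robustness reduces to a continuity statement: if $\rho_t\to\rho$ and $U_t\to U$, then $P(\rho_t,U_t)\to P(\rho,U)$. The three axioms would be played off against each other as follows. Indifference pins down $S$ exactly at a block-diagonal unitary, namely $U_0=I$, where it forces $S(\rho,I)=I$ for every $\rho$. Robustness then forces $P(\rho,U_\eps)\approx\diag(\rho_{11},\dots,\rho_{NN})$ for the rotations
\[
U_\eps=\begin{pmatrix}\cos\eps&-\sin\eps\\ \sin\eps&\cos\eps\end{pmatrix},
\]
which converge to $I$ but are not block diagonal. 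Decomposition invariance would then be used to compute $S(\tfrac12 I,U_\eps)$ in two ways and show the two answers are incompatible.

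Concretely, the first step is to write the maximally mixed state in two ways,
\[
\tfrac12 I=\tfrac12(\proj{0}+\proj{1})=\tfrac12(\proj{\varphi}+\proj{\varphi^\perp}),
\]
with $\varphi=(\cos\theta,\sin\theta)$ for a suitable $\theta$. Decomposition invariance gives
\[
S(\tfrac12 I,U_\eps)=\tfrac12\bigl(S(\proj0,U_\eps)+S(\proj1,U_\eps)\bigr)=\tfrac12\bigl(S(\proj\varphi,U_\eps)+S(\proj{\varphi^\perp},U_\eps)\bigr).
\]
For a pure state in $N=2$ with full support, $S$ is determined up to a single free parameter by the two marginal conditions \eqref{eq:marginal}. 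So I would parametrize each pure-state transition matrix by its off-diagonal flow and use the marginals to express the net flow, which is of order $ab\,\eps$ for $\varphi=(a,b)$. Robustness at $U_0=I$ then forces every off-diagonal entry of the joint distribution to tend to $0$ as $\eps\to0$.

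The key step, and the one I expect to be the main obstacle, is producing an exact (not merely asymptotic) contradiction. In $N=2$, the computational-basis states $\ket0,\ket1$ leave one column of $S$ unconstrained, and continuity alone only gives $o(1)$ control. My first attempt would be to compare first-order terms in $\eps$. The $\ket0,\ket1$ decomposition forces the off-diagonal entries of the first column to behave like $\sin^2\eps=O(\eps^2)$. The $\varphi$ decomposition forces a net flow of order $\eps$, and with a positivity argument this should force one off-diagonal entry of $P(\tfrac12 I,U_\eps)$ to be of order $\eps$. That would contradict the $O(\eps^2)$ behaviour, provided the undetermined columns can also be controlled.

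If the undetermined columns cannot be controlled this way, I would move to $N=3$. There I would pick $U$ with a nontrivial block structure, a perturbation $U_\eps$ that merges the blocks, and a pure-state decomposition in which indifference applied to the limiting $U$ pins every relevant column. The requirement is that the two decompositions then demand different limiting values of some entry of $P(\rho,U_\eps)$. Robustness forces both decompositions to converge to the same limit $P(\rho,U)$, so this gives the contradiction.
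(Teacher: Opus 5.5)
Your proposal has a genuine gap, and it sits exactly at the step you flag; it cannot be closed with the constraints your $N=2$ plan uses. The entries $x_\eps \coloneqq S(\proj1,U_\eps)_{10}$ and $y_\eps\coloneqq S(\proj0,U_\eps)_{01}$ lie in columns of Born weight zero. They are therefore invisible to the marginal conditions, and also to robustness, which only sees $P=S\diag(p)$. The off-diagonal entries of $S(\tfrac12 I,U_\eps)$ are $\tfrac12(\sin^2\eps+x_\eps)$ and $\tfrac12(\sin^2\eps+y_\eps)$, and nothing makes them $O(\eps^2)$. In fact everything you propose to use can be satisfied simultaneously. Take $\varphi=(a,b)$ with, say, $a,b>0$ (other signs are symmetric), and let $\Delta_\eps\coloneqq p_0-q_0=(a^2-b^2)\sin^2\eps+2ab\sin\eps\cos\eps$. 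Give $P(\proj\varphi,U_\eps)$ the off-diagonal entries $(P_{10},P_{01})=(\Delta_\eps,0)$ and $P(\proj{\varphi^\perp},U_\eps)$ the entries $(0,\Delta_\eps)$. Set $x_\eps=y_\eps=\Delta_\eps/a^2-\sin^2\eps=b\sin\eps\,(2a\cos\eps-b\sin\eps)/a^2$. For small $\eps>0$, all entries are nonnegative and all marginal conditions hold; the one for $\tfrac12 I$ merely forces $x_\eps=y_\eps$. Both decompositions give the same $S(\tfrac12 I,U_\eps)$, with off-diagonal entries $\Delta_\eps/(2a^2)$. Every off-diagonal entry of every joint distribution involved is $O(\eps)\to0$, as robustness at $U_0=I$ demands. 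So the order-$\eps$ versus order-$\eps^2$ comparison produces nothing. Your $N=3$ fallback specifies no $U$, $\rho$, or decomposition, so it does not supply the missing step either.

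There are two ways to actually finish. In $N=2$ a contradiction does exist, but it is of order one and needs the full force of decomposition invariance. That axiom makes $S(\rho,U)_{ji}=\Tr(\rho E_{ji})$ affine in $\rho$, with $E_{ji}\succeq0$. Imposing the output-marginal identity, which is quadratic in $\rho$, for \emph{all} mixed states forces $E_{ji}-U^\dagger\proj jU$ to be diagonal. This leaves one real parameter per $U$, and positivity of the $E_{ji}$ forces it to vanish whenever $U$ has no zero entry. Hence $S(\rho,U_\eps)_{ji}=q_j$ is the product theory, so $P(\tfrac12 I,U_\eps)$ has all entries $\tfrac14$, while indifference gives $P(\tfrac12 I,I)=\tfrac12 I$.

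The paper's route is simpler and uses neither a perturbation nor robustness. Take the block-diagonal $U=H\oplus1$ on $\C^3$ and $\rho_\pm=\tfrac12\proj\pm+\tfrac12\proj2$. Indifference plus the output marginals completely determine $S(\proj\pm,U)$ and $S(\rho_\pm,U)$; the zero-weight third column of $S(\proj\pm,U)$ is fixed by indifference rather than left free. Decomposition invariance applied to $\rho_\pm$ then gives $S(\proj2,U)=S(\proj+,U)$ and $S(\proj2,U)=S(\proj-,U)$, although $S(\proj+,U)\neq S(\proj-,U)$. Once every column is pinned at the unperturbed $U$, the perturbation $U_\eps$ is superfluous. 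This is why the paper obtains the stronger statement that indifference and decomposition invariance alone are incompatible.
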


Aaronson also comments on complexity theoretic implications of hidden-variable theories.
Specifically, he shows that the graph isomorphism problem can be solved with bounded error by a deterministic classical polynomial-time Turing machine that makes a single query to an oracle that samples histories according to an arbitrary indifferent and robust hidden-variable theory \cite{Aaronson}.
Formally, the latter defines a complexity class \textsc{DQP} (dynamical quantum polynomial time) which he shows contains the class \textsc{SZK} (statistical zero knowledge). Follow-up works \cite{AaronsonBFL16,AaronsonGIMR25,MiloschewskyP25} define and study the related complexity class \textsc{PDQP} (product dynamical quantum polynomial time) characterizing the power of quantum computers which can make \emph{non-collapsing} measurements.

\begin{table}[t]
    \centering
    \begin{tabular}{c|c|c|c|c|c}
       & Product & Dieks & Flow & Schr\"odinger & \textbf{Capped Schr\"odinger} \\
       \hline
       Symmetry & $\checkmark$ & $\checkmark$ & $\checkmark$ & $\checkmark$ & $\boldsymbol\checkmark$ \\
       Indifference & no & $\checkmark$ & $\checkmark$ & $\checkmark$ & $\boldsymbol\checkmark$ \\
       Robustness & $\checkmark$ & block & $\checkmark$ & \textbf{no} & $\boldsymbol\checkmark$ \\
       Decomposition Invariance & $\checkmark$ & \textbf{no} & no & no & \textbf{no} \\
       Commutativity & $\checkmark$ & product\textbf{$^\ast$} & no & product\textbf{$^\ast$} & \textbf{product$^\ast$}
    \end{tabular}
    \caption{Comparison of different hidden-variable theories and the axioms that they satisfy.
    Asterisks refer to the weaker \emph{distributional} product commutativity property.
    Results obtained in this paper are highlighted in bold.}
    \label{tab:comparison}
\end{table}

\subsection{Result: Schr\"odinger Theory is Not Robust}
\label{sec:result st not robust}

The first result of this paper resolves Aaronson's question in the negative:
we show that the Schr\"odinger theory is not robust.
Let us recall its definition.
Given a unitary matrix $U$ and a density matrix $\rho$, we can define a joint distribution $\PST(\rho,U)$ with column marginals $p_i = \rho_{ii}$ and row marginals $q_j = (U\rho U^\dagger)_{jj}$ as the limit of the following iteration, which is known as \emph{Sinkhorn's algorithm}~\cite{Sinkhorn,SinkhornKnopp}:
Starting from $X^{(0)}\coloneqq\abs{U}\coloneqq(\abs{U_{ji}})_{j,i}$, define the alternating column and row scalings
\begin{align*}
X^{(2t+1)}_{ji}
\coloneqq
X^{(2t)}_{ji}
\frac{p_i}{\sum_k X^{(2t)}_{ki}},\qquad
X^{(2t+2)}_{ji}
\coloneqq
X^{(2t+1)}_{ji}
\frac{q_j}{\sum_k X^{(2t+1)}_{jk}},
\end{align*}
with the convention $0/0\coloneqq0$.
The joint distribution is defined as $\PST(\rho,U)\coloneqq \lim_{t\to\infty}X^{(t)}$.
The fact that this limit exists is not immediate, but we defer this discussion for now (more on matrix scaling in \cref{subsec:intro capped schrodinger robust,sec:preliminaries}).
The Schr\"odinger theory is then defined via
$\SST(\rho,U)_{ji} \coloneqq \PST(\rho,U)_{ji} / p_{i}$ when all~$p_i \neq 0$, and otherwise as the limit~$\lim_{\eps \downarrow 0} \SST((1-\eps) \rho + \eps \frac{I_N}{N}, U)$.%
\footnote{There are other reasonable choices for dealing with columns with~$p_i=0$.
For instance, one may choose~$\SST(\rho,U)_{ji} \coloneqq \abs{U_{ji}}^2$ when~$p_i=0$ instead, and we shall do so later for our modification of the Schr\"odinger theory.
This does not affect the robustness discussion, as the latter only refers to the joint distribution~$P$ rather than the transition matrix~$S$.}
In either case, $\PST(\rho,U) = \SST(\rho,U) \diag(p_1,\dots,p_N)$, so the former is indeed the joint distribution arising from the latter transition matrix.
We show:


\begin{theorem}\label{thm:no go}
    Schr{\"o}dinger's hidden-variable theory is not block-robust.
    In particular, it is not robust.
\end{theorem}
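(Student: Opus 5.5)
The plan is to exhibit, for each dimension~$N$, a pure state $\rho=\proj{\psi}$ and a one-parameter family of unitaries $U_\delta$ with the following properties. For all parameter values the zero pattern of $\abs{U_\delta}$ is the same and fully indecomposable, so the minimal blocks agree and the block-robustness hypothesis applies. Two members $U_{\delta_0},U_{\delta_1}$ are $2^{-\Omega(N)}$-close in $\mnorm{\cdot}$. Yet the Sinkhorn limits $\PST(\rho,U_{\delta_0})$ and $\PST(\rho,U_{\delta_1})$ differ by $\Omega(1/N)$ in some entry. This contradicts block robustness with, say, $r(N)=N^2$ and any polynomial~$s$, and hence robustness as well.

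The mechanism I would exploit is that the scaled matrix has the form $X_{ji}=a_j\abs{U_{ji}}b_i$. The limit depends on the \emph{ratios} of small entries, not on their absolute sizes, whenever the marginals $(p,q)$ are nearly tight for a Hall-type cut. Concretely, I would take $\abs{U}$ to be a ``staircase'' or cyclic pattern: a dominant near-permutation part, plus a chain of $N$ off-diagonal entries of size about $\eps=2^{-N}$. These small entries link the blocks into one indecomposable block. I would then choose $\psi$ so that the Born marginals $p_i=\abs{\psi_i}^2$ and $q_j=\abs{(U\psi)_j}^2$ force a constant amount of mass (or an $\Omega(1/N)$ amount per link) to be transported along the chain of small entries. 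Since the flow through each small entry is then fixed by the marginals, the dual scalings $a_j,b_i$ must grow like powers of $1/\eps$ along the chain. Perturbing the ratio of consecutive small entries by a factor $1+\Theta(1/N)$ changes these entries only by $O(\eps/N)=2^{-\Omega(N)}$ in absolute terms. However, the perturbation compounds multiplicatively along the $N$ links, so it redistributes a constant-factor share of the remaining mass on some entry. This is the step that yields an $\Omega(1/N)$ change in one entry, and, summed over the chain, a constant change in total variation.

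The key steps, in order, are as follows.
\begin{enumerate}
\item Write down the explicit pattern and find a unitary realizing it. I would use a product of $N$ nearly-trivial $2\times 2$ rotations by angles $\theta_k\approx\eps$ on consecutive coordinates, composed with a fixed permutation or Fourier-like layer, so that the moduli are controlled up to $1+O(\eps)$ relative errors.
\item Compute $q=\operatorname{diag}(U\rho U^\dagger)$ for a carefully chosen $\psi$. The difficulty is that $q$ is correlated with $U$, so the tight or near-tight cut must hold simultaneously for both $U_{\delta_0}$ and $U_{\delta_1}$, up to exponentially small error.
\item Solve the scaling equations $\sum_j a_j\abs{U_{ji}}b_i=p_i$ and $\sum_i a_j\abs{U_{ji}}b_i=q_j$ in closed form, or with two-sided bounds, by propagating along the chain. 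I would use uniqueness of the Sinkhorn limit, taken from the preliminaries, so that exhibiting one valid scaling suffices.
\item Compare the two solutions entrywise.
\end{enumerate}

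I expect step~2, together with the bookkeeping in step~3, to be the main obstacle. Unitarity couples the small entries to the large ones, and the pure-state marginals move when $U$ moves. One must therefore check that the exponentially small input change does not also shift the marginals in a way that cancels the amplified effect. My first attempt would be to pick $\psi$ supported so that the dependence of $q$ on the angles enters only at order $\eps^2$, for instance by placing $\psi$ on coordinates where the rotations act at second order. The marginals are then essentially identical for both unitaries, and the entire discrepancy comes from the changed ratio of the small moduli.
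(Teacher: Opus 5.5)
Your outer logic matches the paper: same support, exponentially close unitaries, an $\Omega(1/N)$ entrywise gap, and a contradiction with any polynomial $s$. The mechanism you propose for the gap, however, cannot work.

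\textbf{The gap.} Your plan needs the Born marginals to \emph{force} $\Omega(1)$ mass (or $\Omega(1/N)$ per link) through entries of $\abs{U}$ of size about $2^{-N}$. For a unitary with Born marginals this is impossible, whatever $\psi$ you choose. Set $P=\sum_{i\in I}\proj{i}$ and $Q=\sum_{j\in J}U^\dagger\proj{j}U$. Then $(P+Q-I)^2\succeq 0$, i.e.\ $P+Q-I\preceq PQ+QP$. Tracing against $\rho$ and applying Cauchy--Schwarz gives $p(I)+q(J)-1\le 2\sum_{i\in I,j\in J}\abs{U_{ji}}\sqrt{p_iq_j}$ for all $I,J$; this is the paper's feasibility lemma for the capped theory. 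By max-flow min-cut, there is therefore always a plan with the right marginals, supported in the support of $\abs U$, with every entry at most $2\abs{U_{ji}}$. So the marginals can force at most $O(N^2 2^{-N})$ mass through your chain, and your step~2 is unachievable, not merely delicate.

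The forced regime is also the wrong one in general. Suppose the marginals pin down the flow on each small link, as in a chain where each link is the only edge crossing some cut. Then the corresponding terms of $\sum P_{ji}\ln(P_{ji}/Q_{ji})$ are constant on the feasible set, and the Sinkhorn limit does not depend on the small entries at all.

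Your compounding heuristic also points the wrong way. Along a cycle, scaling only sees the cross-ratio, and the scaled per-link ratio is its $n$-th root. Hence $N$ perturbations by factors $1+\Theta(1/N)$ are averaged down to a $1+O(1/N)$ change, not amplified.

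\textbf{What works.} The paper uses slack in the marginals together with a single tiny entry.
\begin{enumerate}
\item Take $M_\delta$ with ones on the diagonal and superdiagonal, and one entry $\delta$ closing the $n$-cycle.
\item Uniform marginals allow every plan with diagonal $t$ and cyclic off-diagonal $1-t$. This includes $t=1$, so no mass is forced through $\delta$.
\item The cross-ratio $\prod_k M_{k,k+1}\cdot M_{n,1}/\prod_k M_{kk}$ is invariant under row and column scaling. This pins down $((1-t)/t)^n=\delta$, i.e.\ $t=1/(1+\delta^{1/n})$.
\item An exponentially small entry can change by an exponential \emph{factor} at exponentially small absolute cost. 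Taking $\delta=2^{-n}$ versus $2^{-3n}$ moves $\delta^{1/n}$ from $1/2$ to $1/8$ and $t$ from $2/3$ to $8/9$. The dual scalings grow by the constant factor $\delta^{-1/n}$ per link, not by $1/\eps$.
\end{enumerate}

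The unitary realization also removes your worry about moving marginals.
\begin{enumerate}
\item Each entry of $M_\delta$ is replaced by a multiple of the all-ones $2\times 2$ matrix $J$. Diagonal blocks are realized by $\proj{+}$ and off-diagonal ones by $\proj{-}$.
\item One extra basis vector $\ket{\bot}$ completes the column carrying $\delta$.
\item The state $n^{-1/2}\sum_x\ket{x,+}$ is fixed by every $U_\delta$. So the marginals are exactly uniform on the $2n$ active coordinates and independent of $\delta$.
\item Uniqueness of the scaling limit then gives $\PST(\rho,U_\delta)_{\mathrm{act}}=\frac{1}{4n}T_\delta\otimes J$. This yields an entrywise gap of $1/(18n)$ while $\mnorm{U_\delta-U_{\delta'}}<2^{-n}$.
\end{enumerate}
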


\noindent
To prove this result, we construct for any $N = 2n+1 \geq 5$ an $N\times N$ density matrix~$\rho$ and two unitaries~$U_0$ and~$U_1$ such that
\[
 \mnorm{U_0-U_1}<2^{-n},\qquad
 \mnorm{\PST(\rho,U_0)-\PST(\rho,U_1)}
 >\frac1{10N}.
\]
Our $U_0$ and $U_1$ have the same support, so the non-robustness is caused by the scaling procedure itself, rather than an artifact of a change in support.

\subsection{Result: Capped Schr\"odinger Theory is Robust}\label{subsec:intro capped schrodinger robust}
We present a new hidden-variable theory, a variant of Schr\"odinger theory, that salvages robustness.
Before giving the precise definition, we provide some additional context to motivate the definition.
It is well-known that if there exists an asymptotic row- and column-rescaling~$P_\star$ of a non-negative matrix~$Q$ to prescribed marginals~$r,s \in \RR_{\geq 0}^N$, then it is the \emph{minimum information projection}~\cite{IrelandKullback1968,csiszarIDivergenceGeometryProbability1975} of the matrix~$Q$ onto the polytope of non-negative matrices~$P$ with these marginals.
In fact, Schr\"odinger~\cite{Schrodinger1931} originally motivated his theory through the maximum-entropy-principle, which is essentially equivalent to this minimum-information viewpoint.
Formally, $P_\star$ is the unique minimizer of the \emph{Kullback--Leibler divergence}~$\KL(P \Vert Q) \coloneqq \sum_{j,i} P_{ji} \ln (P_{ji}/Q_{ji})$, where~$P$ is taken from the~polytope
\[
  \transport(r,s) \coloneqq \left\{ P \in \RR_{\geq 0}^{N \times N} : \sum_{j} P_{ji} = r_i, \, \sum_{i} P_{ji} = s_j \text{ for all }j, i\right\}.
\]
Conversely, Sinkhorn's algorithm converges to~$P_\star$ whenever $\KL(P \Vert Q) < \infty$ for some $P$ in this polytope; that is, when there exists a matrix $P$ with the desired marginals with support inside the support of~$Q$.
In particular, $P_\star$ is always a limit of row- and column-rescalings of~$Q$.
See \cref{sec:preliminaries} and in particular \cref{thm:sinkhorn-limit-as-KL-projection} for more on this perspective on matrix scaling.
The above discussion reveals that Aaronson's Schr\"odinger theory admits the following alternative formulation:
\[ \PST(\rho,U)
=
\argmin_{P\in\transport(p,q)}
\KL(P \,\Vert\, \abs{U}).
\]
To define our \emph{capped Schr\"odinger theory}, we modify the above by imposing an extra \emph{capacity constraint}, motivated by the fact that the elements of $\transport(r,s)$ can be interpreted as flows (more on this below).
Given a density matrix $\rho$ and a unitary $U$, let
\[
\qtransport(\rho,U)\coloneqq
\left\{
P\in\mathbb R_{\geq0}^{N\times N}:
\sum_j P_{ji}=\rho_{ii},\
\sum_i P_{ji}=(U\rho U^\dagger)_{jj},\
P_{ji}\leq C_{ji}\ \text{for all }j,i
\right\}
\]
where $C_{ji}\coloneqq 2 \abs{U_{ji}}\sqrt{\rho_{ii}(U\rho U^\dagger)_{jj}}$.
We define a joint distribution by minimizing the Kullback--Leibler divergence over~$\qtransport(\rho,U)$:
\[
\PCST(\rho,U)
\coloneqq
\argmin_{P\in\qtransport(\rho,U)}
\KL(P \,\Vert\, \abs{U}).
\]
We establish that $\qtransport(\rho,U)$ is always non-empty; uniqueness of the minimizer then follows from strict convexity of the objective.
Note that~$\PCST(\rho,U)$ agrees with~$\PST(\rho,U)$ whenever the latter satisfies the capacity constraint, but in general they may differ.
Finally, we define the \emph{capped Schr\"odinger theory} by~$\SCST(\rho, U)_{ji} = \PCST(\rho,U)_{ji} / \rho_{ii}$ whenever~$\rho_{ii} \neq 0$, and~$\SCST(\rho,U)_{ji} = \abs{U_{ji}}^2$~otherwise.

\begin{theorem}\label{thm:yay}
The capped Schr\"odinger hidden-variable theory satisfies symmetry, indifference, robustness, and distributional product commutativity.
\end{theorem}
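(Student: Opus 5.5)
The proof checks the four properties separately, after a preliminary lemma establishing that $\qtransport(\rho,U)$ is non-empty. For non-emptiness I would use the \emph{Product Theory} candidate $P_{ji} = p_i q_j$, with $p_i=\rho_{ii}$ and $q_j=(U\rho U^\dagger)_{jj}$. It has the right marginals but need not respect the capacity constraint or the support of $\abs U$, so I would instead view $\qtransport$ as a feasible-flow problem in the bipartite network with capacities $C_{ji}$. By the max-flow/min-cut (Gale--Hoffman) criterion it suffices to show, for all $I,J$,
\[
\sum_{i\in I}p_i \le \sum_{j\in J}q_j + \sum_{i\in I, j\notin J} C_{ji}.
\]
I would prove this by writing $\sqrt{p_i}=\norm{\rho^{1/2}e_i}$ and $\sqrt{q_j}=\norm{\rho^{1/2}U^\dagger e_j}$. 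Expanding $\rho^{1/2}e_i=\sum_j \overline{U_{ji}}\,\rho^{1/2}U^\dagger e_j$ then lets a Cauchy--Schwarz argument on the $I\times J^c$ block bound the excess mass by $\sum 2\abs{U_{ji}}\sqrt{p_iq_j}$. This is where the factor~$2$ should come from, via $\norm{a+b}^2\le \norm a^2 + 2\norm a\norm b+\norm b^2$-type estimates. Uniqueness of the minimizer then follows from strict convexity of $\KL$ on the compact convex set (restricted to $\mathrm{supp}\abs U$, which contains $\mathrm{supp}\,C$).

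\textbf{Symmetry.} Permuting indices permutes $\abs U$, $p$, $q$ and $C$ consistently, and $\KL$ is permutation invariant, so the unique minimizer is equivariant. \textbf{Indifference.} $C_{ji}=0$ whenever $U_{ji}=0$, so $\PCST$ is supported in $\mathrm{supp}\abs U$. The definition of $\SCST$ for $p_i=0$ also uses $\abs{U_{ji}}^2$. Hence every block of $U$ is a block of $\SCST$.

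\textbf{Distributional product commutativity.} I would follow Aaronson's argument for Schr\"odinger theory. For $\rho=\proj{\psi_A}\ot\proj{\psi_B}$ and $\mathcal U=U_A\ot I$, the marginals, $\abs{\mathcal U}$ and $C$ all factor as tensor products. I would show the minimizer factors as $\PCST(\proj{\psi_A},U_A)\ot \diag(\abs{\psi_B}^2)$, with $I_B$ contributing the identity transport. To see this, I would check that for a product-of-identity structure the feasible set restricted to $\mathrm{supp}(\abs{U_A}\ot I_B)$ decomposes block-wise into scaled copies of the $A$-problem: each $B$-block is the $A$-problem scaled by $\abs{\psi_{B,b}}^2$, and $C$ scales linearly in that weight too. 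Since $\KL$ is additive and scales linearly under this block structure, the minimizer is the scaled $A$-minimizer. Both sides of \cref{eq:propercomm} then equal the tensor product of the two local joint distributions.

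\textbf{Robustness.} This is the main obstacle. My plan is a stability estimate for the strongly convex program. Perturbing $(\rho,U)$ by $\delta=1/s(N)$ perturbs $p,q,C,\abs U$ entrywise by $\mathrm{poly}(N)\sqrt\delta$; the square root arises from $\sqrt{p_iq_j}$ and the modulus. The point of the cap is that $P_{ji}\le C_{ji}\le 2\abs{U_{ji}}$, so the objective's ratios $P_{ji}/\abs{U_{ji}}$ stay bounded by $2$. That rules out the exponentially large scalings that break robustness in \cref{thm:no go}. Concretely, I would (i) construct from the optimizer $P_0$ for input $0$ a feasible $\tilde P\in\qtransport(\rho_1,U_1)$ with $\mnorm{\tilde P-P_0}\le\mathrm{poly}(N)\delta^{c}$. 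I would do this by shrinking $P_0$ slightly towards a strictly feasible point (e.g.\ a convex combination with a feasible point having slack in the caps) and repairing the marginals through an augmenting-flow correction, using the cut inequality above with slack. (ii) I would use the bounded ratios to show $\KL(\cdot\Vert\abs U)$ is Lipschitz, up to $\mathrm{poly}(N)\delta^c\log(1/\delta)$, on these nearby points. The hard case is tiny entries of $\abs U$, where I would use that $P_{ji}\le C_{ji}$ forces those entries to be small anyway. (iii) I would then apply Pinsker-type strong convexity: $\KL(P\Vert Q)-\KL(P_\star\Vert Q)\ge \KL(P\Vert P_\star)\ge\frac12\norm{P-P_\star}_1^2$, valid for the I-projection onto a convex set. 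Combining (i)--(iii) symmetrically gives $\norm{P_0-P_1}_1\le\mathrm{poly}(N)\delta^{c'}$, which yields robustness for a suitable polynomial $s$. Step (i), constructing feasible nearby points with controlled slack, is where I expect most of the work.
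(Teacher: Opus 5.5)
Most of your outline holds up; the gap is in robustness step (i), which is also where you expected most of the work.

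\textbf{What works.} Your feasibility argument is a valid alternative to the paper's operator inequality $P+Q-I\preceq PQ+QP$. Write $\rho^{1/2}e_i=a_i+b_i$, with $a_i$ the part of your expansion over $j\in J$ and $b_i$ the part over $j\notin J$. You need $\norm{a_i+b_i}^2-\norm{a_i}^2=2\operatorname{Re}\langle a_i+b_i,b_i\rangle-\norm{b_i}^2\le 2\sqrt{p_i}\,\norm{b_i}$, not the plain expansion you cite, which leaves an uncontrolled $\norm{b_i}^2$. Together with $\sum_{i\in I}\norm{a_i}^2\le q(J)$, this gives exactly the factor $2$. Symmetry, indifference and the block factorization for product commutativity match the paper. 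You should still say why the zero-marginal columns of $S(\mathcal U\rho\mathcal U^\dagger,\mathcal V)$ are harmless: they meet zero rows of $P(\rho,\mathcal U)$. Steps (ii)--(iii) and the symmetric combination are the paper's robustness argument.

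\textbf{The gap in step (i).} Shrinking $P_0$ toward a point with slack in the caps cannot produce a point below the perturbed caps. The slack at an entry is at most the cap itself, and a cap can drop to zero under an arbitrarily small perturbation.

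Concretely, take $N=2$ and $\rho=\tilde\rho=\proj{1}$. Let $U=\begin{psmallmatrix}\cos\theta & -\sin\theta\\ \sin\theta & \cos\theta\end{psmallmatrix}$ with $\theta\le\delta$, and $\tilde U=I$.
\begin{itemize}
\item $\qtransport(\rho,U)=\{P_0\}$ with $(P_0)_{21}=\sin^2\theta>0$, forced by the marginals.
\item But $\tilde C_{21}=0$, so no combination $(1-\lambda)P_0+\lambda P_s$ with $\lambda<1$ respects the new cap.
\item The cut inequality for $(\tilde\rho,\tilde U)$ is tight here: $I=J=\{1\}$ gives $1\le 1+0$. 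So ``the cut inequality with slack'' is not available either.
\end{itemize}

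\textbf{The missing idea: no slack is needed.}
\begin{itemize}
\item Truncate entrywise, $P'\coloneqq\min(P_0,\tilde C)$. This costs at most $\norm{C-\tilde C}_1=O(N^{5/2}\sqrt\delta)$ in $\ell^1$.
\item Then use a slack-free rounding lemma: if $\transport(r,s|A)\neq\emptyset$ and $0\le X\le A$ entrywise, some $Y\in\transport(r,s|A)$ satisfies $\norm{Y-X}_1\le N$ times the $\ell^1$ marginal error of $X$.
\item Non-emptiness is just the feasibility lemma applied to $(\tilde\rho,\tilde U)$.
\end{itemize}
The paper proves the rounding lemma by taking the $\ell^2$-nearest $Y$, showing that the sign graph of $Y-X$ is acyclic, and bounding the correction through a topological order. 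Your augmenting-flow idea also works, with two changes. First lower $X$ to a subflow with marginals at most $(r,s)$. Then augment in the residual network: its maximum flow is the full mass precisely because the polytope is nonempty, and each augmenting path uses at most $2N-1$ middle edges.

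With this replacing your slack construction, steps (ii) and (iii) go through as you describe.
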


We comment briefly on the proof of robustness, which is the most difficult aspect of \cref{thm:yay}.
The main advantage of the capacity bound is that it enables us to project elements of $\qtransport(\rho,U)$ to nearby elements of $\qtransport(\tilde\rho,\tilde U)$ when $(\rho,U)\approx(\tilde\rho,\tilde U)$.
This allows us to show that $\PCST(\rho,U)$ is near an approximate minimizer for the optimization problem defining $\PCST(\tilde\rho,\tilde U)$, and vice versa.
By using a strong convexity property of the KL-divergence, we can then deduce that $\PCST(\rho,U) \approx \PCST(\tilde\rho,\tilde U)$.

\medskip

As noted earlier, the elements of $\transport(r,s)$ can be interpreted as flows of a bipartite graph with $N$ source vertices and $N$ sink vertices that send $r_i$ units of flow from the $i$-th source and receive $s_j$ units of flow at the $j$-th sink.
The polytope $\qtransport(\rho,U)$ can be interpreted analogously, but with additional capacity constraints on the edges $i\to j$.
Aaronson's Flow Theory $P_{\mathrm{FT}}$ is defined through solving maximum-flow problems on a polytope similar to~$\qtransport(\rho,U)$, but using the capacity $C_{ji} \coloneqq \abs{U_{ji}}$~\cite{Aaronson}.
If one optimizes the Kullback--Leibler divergence over this polytope, one also obtains a robust theory.
However, unlike our capped Schr\"odinger theory, the resulting theory does \emph{not} satisfy the distributional product commutativity property.
This is related to the fact that the resulting optimization problem for product states and local unitaries will not factorize appropriately.
Aaronson~\cite{Aaronson} also notes that for smaller capacities $C_{ji} \coloneqq  |U_{ji}|^{1+\eps}$, the flow polytope can be empty in general.
One can observe that for our capacity constraint $C_{ji} \coloneqq  2 \abs{U_{ji}} \sqrt{q_{j} p_{i}}$, the constant $2$ is also tight for feasibility.
Indeed one can take $U_\eps = \begin{bsmallmatrix} \sqrt{\eps} & \sqrt{1-\eps} \\ \sqrt{1-\eps} & -\sqrt{\eps} \end{bsmallmatrix}$ and $\rho = \ket{+}\bra{+}$.
When $\eps \to 0$, $p_1+q_1-1 = \sqrt{\eps (1-\eps)}$, whereas $q_1 \to \frac12$ as $\eps \to 0$, and so $C_{11} = 2 \sqrt{\eps q_1/2}$ scales as~$\sqrt{\eps}$ when~$\eps \to 0$; using a smaller constant than~$2$ in the definition of~$C_{11}$ would violate (for sufficiently small~$\eps>0$) the inequality~$p_1+q_1-1 \leq C_{11}$, which is necessary for the non-emptiness of~$\qtransport(\rho,U)$.

\subsection{Result: New Limitations on Axioms for Hidden-Variable Theories}\label{subsec:intro new limitations}
To obtain a complete picture of which axioms can be simultaneously satisfied, we also identify the following new limitations.
They are proved in \cref{sec:limitations}.

\begin{theorem}\label{thm:productcom}
No hidden-variable theory satisfies both indifference and product commutativity (in the stronger sense of \cref{eq:oldcomm}).
\end{theorem}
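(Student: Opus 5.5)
The plan is to build an explicit small counterexample, with $A=B=2$ or $A=3$, $B=2$, from a pure product state $\ket\psi=\ket{\psi_A}\ot\ket{\psi_B}$ and two local unitaries. It exploits the one place where product commutativity in the sense of \cref{eq:oldcomm} is stronger than its distributional version: the columns~$i$ with $\rho_{ii}=0$. In the distributional version these columns are multiplied by zero, but in \cref{eq:oldcomm} the full column-stochastic matrices must agree. Indifference is the tool that pins down those otherwise free columns. If a minimal block $(I,J)$ of a unitary has $\abs I=\abs J=1$, say $I=\{i\}$ and $J=\{j\}$, then indifference together with column-stochasticity forces $S(\sigma,U)_{ji}=1$ and all other entries of column~$i$ to vanish, whatever the state~$\sigma$ is. More generally, if some input column in a block carries all of the block's probability, its column is forced by the marginal condition \cref{eq:marginal}. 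The goal is to chain these forced entries through the two products in \cref{eq:oldcomm} so that the two sides assign different values to the same entry.

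Concretely, I would take $\ket\psi$ to be a computational basis state such as $\ket{0}\ot\ket{0}$. This makes many columns have zero probability, while the intermediate states $\mathcal U\rho\mathcal U^\dagger$ and $\mathcal V\rho\mathcal V^\dagger$ remain simple. The first step is to choose $\mathcal V=I_A\ot U_B$ with $U_B$ a permutation (e.g.\ $X$). Then every minimal block of $\mathcal V$ has size one, so indifference forces $S(\sigma,\mathcal V)=\abs{\mathcal V}$ for every~$\sigma$. This turns \cref{eq:oldcomm} into the condition that $S(\sigma',\mathcal U)$ and $S(\sigma,\mathcal U)$ be conjugate under the permutation $\abs{\mathcal V}$, where $\sigma=\rho$ and $\sigma'=\mathcal V\rho\mathcal V^\dagger$.

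The second step is to choose $U_A$ with a mixed block structure. A natural choice is a $3\times3$ unitary that is a direct sum of a $1\times1$ block and a Hadamard-type $2\times2$ block, or a unitary with a single full $2\times 2$ block whose support sits differently relative to $\ket{\psi_A}$. The minimal blocks of $\mathcal U=U_A\ot I_B$ are then products of the blocks of $U_A$ with the singletons of~$B$. In some of these blocks the zero-probability columns are forced by the marginal condition for the state $\sigma$, but are left free, or forced differently, for the state $\sigma'$. The third step is to compare a single column on both sides of \cref{eq:oldcomm}. I would pick a column~$i$ with $\rho_{ii}=0$, so that the distributional version is silent there, and show that the two sides put different mass on some row.

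The main obstacle is the choice of gadget. A quick check shows that the naive choice $U_A=H$, $U_B=X$, $\ket\psi=\ket{00}$ is consistent: all forced entries match, and the free columns can be chosen to satisfy the conjugation condition. So the construction needs a second layer of constraints. I would try either a third local dimension (so that some column is forced through a $1\times1$ block of $U_A$ for one of the states, while the marginal condition forces a different value for the other state), or making both $U_A$ and $U_B$ non-permutations, for instance $\mathcal U=H\ot I$ and $\mathcal V=I\ot H$ with $\ket\psi=\ket{00}$. In the latter case the columns of $S(\ket{+}\ket{0},\mathcal V)$ and $S(\ket0\ket+,\mathcal U)$ that carry probability $1/2$ are forced to be uniform within their blocks. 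I would then expand both products entrywise and look for an entry, in a zero-probability column, where the forced parts cannot be matched by any choice of the free stochastic columns. Once such an entry is identified, verifying the contradiction is a finite computation.
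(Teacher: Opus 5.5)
There is a genuine gap: your proposal never exhibits the gadget, and the search space you describe provably contains none. You fix one factorization $\C^N=\C^A\ot\C^B$ (e.g.\ $2\ot2$ or $3\ot2$) and look for a single instance of \cref{eq:oldcomm} whose forced entries clash. But for a fixed factorization, indifference and product commutativity can always be satisfied together.

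To see this, take any indifferent theory $S'$ (e.g.\ Schr\"odinger's). For pure product states set $S(\rho_A\ot\rho_B,U_A\ot I_B)\coloneqq S'(\rho_A,U_A)\ot I_B$ and $S(\rho_A\ot\rho_B,I_A\ot U_B)\coloneqq I_A\ot S'(\rho_B,U_B)$, and use any indifferent assignment elsewhere. This is well defined. The factors of a pure product state and of a local unitary are unique, and the overlap $U_A\ot I_B=I_A\ot U_B$ only occurs for $U=cI$, where indifference forces $S=I$ anyway. The result is column stochastic with the correct Born marginals and respects the minimal blocks of $U_A\ot I_B$ and $I_A\ot U_B$. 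Both sides of \cref{eq:oldcomm} then equal $S'(\rho_A,U_A)\ot S'(\rho_B,U_B)$.

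So no choice of $U_A$, $U_B$, $\ket\psi$ within one factorization can give a contradiction. This includes your fallback $H\ot I$, $I\ot H$ on $\ket{00}$, and entrywise bookkeeping of forced columns will not help. Your observation that $H\ot I$, $I\ot X$ on $\ket{00}$ is consistent is just one instance of this.

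The missing idea is that the axiom applies to all factorizations $N=AB$ simultaneously. The same pair $(\rho,\mathcal V)$ can be product/local with respect to two different factorizations, so a single free (zero-probability) column of $S(\rho,\mathcal V)$ gets constrained twice.

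The paper takes $N=12=2\cdot6=3\cdot4$. The state $\ket\psi=(\ket0+\ket1+\ket2-\ket3)/2$ is a product state in both factorizations, and $\mathcal V=I_6\ot H$ is local in both. It writes $S(\rho,\mathcal V)\ket6=p\ket6+(1-p)\ket7$ and then applies exactly your forced-column reasoning: indifference plus the Born marginals pin the relevant columns of $S(\mathcal U\rho\mathcal U^\dagger,\mathcal V)$. With $\mathcal U=H\ot I_6$ this forces $p=1$, and with $\mathcal U=(H\op1)\ot I_4$ it forces $p=0$.

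Your local tools are the right ones, and you correctly locate the difference from the distributional version in the zero-probability columns. What is missing is the second factorization, which lets two commutativity constraints act on the same unknown column.
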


\noindent
This theorem complements \cref{thm:scott a}.
The following theorem is a strengthening of \cref{thm:scott b}.
In particular, it implies that Dieks' theory cannot be decomposition-invariant, since it is indifferent.

\begin{theorem}\label{thm:decompinvcom}
No hidden-variable theory satisfies both indifference and decomposition invariance.
\end{theorem}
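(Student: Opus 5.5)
The approach is to turn decomposition invariance into a linearity statement, and then show that this linearity clashes with the marginal requirement \cref{eq:marginal}: the joint distribution $P=S\cdot\diag(\rho_{11},\dots,\rho_{NN})$ is quadratic in $\rho$, while the required row marginal $\diag(U\rho U^\dagger)$ is linear. Indifference is used to rule out the only escape route, namely absorbing the cross terms into the transition matrices.

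\emph{Step 1 (a bilinear identity).} Fix $U$ and a mixture $\rho=\sum_k\lambda_k\proj{\psi_k}$. Write $p(\sigma)$ for the Born vector $(\sigma_{ii})_i$ and $q(\sigma)$ for $(U\sigma U^\dagger)_{jj}$. By \cref{eq:marginal}, $S(\sigma,U)p(\sigma)=q(\sigma)$ for every state $\sigma$, and the maps $p$ and $q$ are linear. Decomposition invariance gives $S(\rho,U)=\sum_k\lambda_kS(\psi_k,U)$. Plugging this into $S(\rho,U)p(\rho)=q(\rho)=\sum_k\lambda_kS(\psi_k,U)p(\psi_k)$ yields
\[
\sum_k\lambda_k\,S(\psi_k,U)\bigl(p(\rho)-p(\psi_k)\bigr)=0 .
\]
For two states with weights $\tfrac12,\tfrac12$ this becomes $\bigl(S(\psi_1,U)-S(\psi_2,U)\bigr)\bigl(p(\psi_1)-p(\psi_2)\bigr)=0$.

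\emph{Step 2 (choice of instance).} Take $N=3$ and $U=1\oplus V$ with $V=\tfrac1{\sqrt2}\begin{bsmallmatrix}1&1\\1&-1\end{bsmallmatrix}$. Its minimal blocks are $(\{1\},\{1\})$ and $(\{2,3\},\{2,3\})$. Let $\psi_1=e_1$ and $\psi_2=(0,\beta,\gamma)$ with $|\beta|^2+|\gamma|^2=1$, so that $p(\psi_1)-p(\psi_2)=(1,-|\beta|^2,-|\gamma|^2)$.

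By indifference, every $S(\cdot,U)$ has first column $e_1$, zero first row off the diagonal, and a column-stochastic $2\times2$ block $T$ on $\{2,3\}$. Hence column $1$ cancels in the difference, and the identity of Step 1 reduces to
\[
T_{e_1}\,(|\beta|^2,|\gamma|^2)^T \;=\; T_{\psi_2}\,(|\beta|^2,|\gamma|^2)^T .
\]
The marginal condition for $\psi_2$ says the right-hand side equals $\bigl(|(V(\beta,\gamma)^T)_1|^2,\,|(V(\beta,\gamma)^T)_2|^2\bigr)$.

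\emph{Step 3 (contradiction).} The block $T_{e_1}=T_{S(e_1,U)}$ is a single fixed matrix that does not depend on $\beta,\gamma$. So the vector $|V(\beta,\gamma)^T|^2$ would depend only on the moduli $(|\beta|^2,|\gamma|^2)$. But $\beta=\gamma=1/\sqrt2$ gives $(1,0)$, while $\beta=-\gamma=1/\sqrt2$ gives $(0,1)$, even though the moduli agree. This contradiction proves the theorem.

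The main obstacle is Step 2: picking a configuration in which indifference actually pins down the problematic parts of the transition matrices. A naive attempt, decomposing along basis vectors, leaves whole columns with $p_i=0$ unconstrained and yields no contradiction. The singleton block $(\{1\},\{1\})$ is what makes the column-$1$ contributions cancel exactly, leaving a phase-sensitive Born marginal that a fixed linear map cannot reproduce. I would double-check that the marginal constraints are used only for the states $e_1$ and $\psi_2$ (both pure) and for the mixed state $\rho$, and that the two-term decomposition is legitimate for an arbitrary pure $\psi_2$.
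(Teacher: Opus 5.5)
Your proof is correct and is essentially the paper's argument up to relabeling the basis. The paper uses $U = H\oplus 1$ with the mixtures $\rho_\pm = \tfrac12\proj{\pm} + \tfrac12\proj{2}$, so the singleton-block state $\ket{2}$ plays the role of your $e_1$ and $\ket{\pm}$ plays the role of your $\psi_\pm=(0,1,\pm1)/\sqrt2$. Your bilinear identity $(S(\psi_1,U)-S(\psi_2,U))(p(\psi_1)-p(\psi_2))=0$ repackages the paper's step of computing $S(\rho_\pm,U)$ from indifference and the marginals and comparing it with $\tfrac12 S(\proj{\pm},U)+\tfrac12 S(\proj{2},U)$. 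That comparison forces $S(\proj{2},U)$ to equal both $S(\proj{+},U)$ and $S(\proj{-},U)$, which is exactly your contradiction.
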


The above results give a complete picture of which combinations of Aaronson's axioms are possible.
All axioms but indifference can be achieved simultaneously, by the product theory, as already observed in \cite{Aaronson}.
If indifference is desired, we must give up decomposition invariance (\cref{thm:decompinvcom}), and commutativity can at best hold in the sense of distributional product commutativity (\cref{thm:scott a,thm:productcom}); the latter is achieved by our capped Schr\"odinger theory along with symmetry and robustness (\cref{thm:yay}).
See also \cref{tab:comparison}.

\subsection{Statement on AI Use}
The counterexamples and the capped version of Schr{\"o}dinger's theory, along with its proof of robustness, were found by ChatGPT in interactive conversations with the authors.
This manuscript was entirely written by the authors, who take full responsibility for its content.

\subsection{Related Work}
Sinkhorn's algorithm~\cite{Sinkhorn,SinkhornKnopp} has been independently discovered and re-discovered in many different settings, e.g. in estimating dynamics in telephone traffic~\cite{kruithofTelefoonverkeersrekening1937}, for contingency table estimation~\cite{demingLeastSquaresAdjustment1940}, and a slight modification was used to give a direct existence proof for a solution to Schr\"odinger's original problem~\cite{fortetResolutionDunSysteme1940}.
As a consequence, it goes by various other names: \emph{iterative proportional fitting} in statistics, or the \emph{RAS algorithm} in economics.
The matrix scaling problem can also be viewed as a discrete version of the \emph{Schr\"odinger bridge} problem.

As a result, there is a large body of work, and various generalizations and applications are known.
For instance, it can be used to solve entropically-regularized optimal transport problems~\cite{cuturiSinkhornDistancesLightspeed2013,altschuler2017nearlinear}.
Matrix scaling is also done in practice for preconditioning linear systems~\cite{lapack}.
In statistics, it gives minimum-information estimates of contingency tables when the marginals are known~\cite{IrelandKullback1968,csiszarIDivergenceGeometryProbability1975}, as well as being useful for maximum-likelihood estimation in log-linear models~\cite{darrochGeneralizedIterativeScaling1972}.
In computer science, it can be used to obtain an exponential factor approximation to the permanent of a nonnegative matrix~\cite{linial1998deterministic}.
For an overview of matrix scaling and its applications, we refer to the survey~\cite{idelReviewMatrixScaling2016}.
There are non-commutative generalizations, such as operator scaling~\cite{gurvitsClassicalComplexityQuantum2004,gargDeterministicPolynomialTime2016,gargOperatorScalingTheory2020} and tensor scaling~\cite{burgisserEfficientAlgorithmsTensor2018,burgisserAlternatingMinimizationScaling2018,gargOperatorScalingTheory2020}, which is related to a variant of the quantum marginal problem and to the classification of multipartite entanglement~\cite{klyachkoQuantumMarginalProblem2004,walterEntanglementPolytopesMultiparticle2013}; these also have applications in statistics, through matrix- and tensor-normal models~\cite{franks2020rigorous,amendola2024bridge,franksOptimalSampleComplexity2026}.
There is also a variant of matrix scaling for unitary matrices~\cite{idelSinkhornNormalForm2015} with an application to quantum circuit decompositions.

As a result of this widespread interest, there are various different robustness-type bounds available for Sinkhorn scaling and consequently, on Schr\"odinger theory \cite{sinkhornContinuousDependenceA1972,CarlierLaborde2020,ghosalStabilityEntropicOptimal2022,EcksteinNutz2022,Landa2022,NutzWiesel2023,DeligiannidisEtAl2024,DuanLyuPowell2026}.
For fixed dimension and fixed marginals, on the domain where the scaling limit exists, it is continuous~\cite{sinkhornContinuousDependenceA1972} as a function of the matrix that is being scaled.
Some authors give quantitative bounds in the setting where one perturbs the prescribed row and column marginals, while keeping the matrix being scaled fixed.
Results that also allow the matrix to vary either require additional non-degeneracy assumptions ensuring that all entries are positive (which naturally occurs in the optimal transport setting), or give bounds whose constants deteriorate when positive matrix entries become very small or when rows have little common support.
These conditions are indeed consistent with our counterexample, which considers sparse matrices with asymptotically small entries.

We also note that entropically-regularized optimal transport with capacity constraints, of which the definition of~$\PCST$ is a special case, has been studied before~\cite[Sec.~5.2]{benamouIterativeBregmanProjections2015}, but we are not aware of any previous results on stability of the solution in this setting.

\section{Preliminaries}\label{sec:preliminaries}
We write $[N] \coloneqq \{1,\dots,N\}$.
For vectors and matrices alike, we use $\norm{\cdot}_1$ for the entrywise $\ell^1$ norm.
For matrices $X$, we also use the entrywise max-norm, defined by $\mnorm{X} \coloneqq \max_{j,i}|X_{ji}|$.
For~$P,Q \in \RR^{N \times N}_{\geq 0}$, we define their \emph{total variation distance} as
\[
 \dTV(P,Q) \coloneqq \frac12\sum_{j,i}|P_{ji}-Q_{ji}|,
\]
and their \emph{Kullback--Leibler (KL) divergence} as
\[
 \KL(P\Vert Q) \coloneqq \sum_{j,i} P_{ji} \ln \frac{P_{ji}}{Q_{ji}} \in \RR \cup \{\infty\}.
\]
By convention, we take $0 \ln \frac{0}{q} \coloneqq 0$ for any $q \geq 0$, and we set $p \ln \frac p0 \coloneqq \infty$ for any $p>0$.
Thus, $\KL(P\Vert Q)<\infty$ if and only if the support of~$P$ is contained in the support of~$Q$.
We emphasize that in both these definitions, we do not require $P,Q$ to be normalized probability distributions.

In the proof of robustness of the capped Schr\"odinger theory, we will use a strong convexity property of the KL divergence.
To state it formally, let $C \in \RR^{N \times N}_{\geq0}$ (not necessarily normalized) and define the probability simplex $\Delta_C \coloneqq \{ P \in \RR_{\geq0}^{N \times N} : \sum_{j,i} P_{ji}=1 \text{ and } C_{ji}=0 \Rightarrow P_{ji}=0 \; \forall j,i \}$.
Then, $\KL(\cdot \Vert C)$ is $1$-strongly convex with respect to the $\ell^1$-distance on~$\Delta_C$:
for any $P, P' \in \Delta_C$ and $t \in (0,1)$, with $P_t \coloneqq (1-t) P + t P'$, we have%
\footnote{We recall the standard argument: The left-hand side is equal to $(1-t) \KL(P \Vert P_t) + t \KL(P' \Vert P_t)$. Thus the inequality follows from Pinsker's inequality, which states that $\KL(P \Vert  Q) \geq \frac12 \norm{P-Q}_1^2$ for any two probability distributions~$P,Q$.}
\begin{align}\label{eq:strong convex}
    (1-t) \KL(P \Vert C) + t \KL(P' \Vert C) - \KL(P_t \Vert C)
& \geq \frac{(1-t)t}2 \, \norm{P - P'}_1^2.
\end{align}

\begin{definition} [Matrix Scaling Problem]
    Given non-negative $Q \in \RR^{N \times N}_{\geq 0}$, desired marginals $r,s \in \RR^{N}_{\geq 0}$, and precision $\delta\geq0$, find $x,y \in \RR^{N}$ such that the \emph{scaling} $P = (P_{ji} \coloneqq  e^{x_{i}} Q_{ji} e^{y_{j}})_{j,i \in [N]}$ satisfies
    \begin{equation}\label{eq:delta scaling}
        \| P^\top 1_{N} - r \|_{1} + \|P 1_{N} - s \|_{1} \leq \delta .
    \end{equation}
    The instance is called \emph{scalable} if there exists a solution for $\delta = 0$; and is called \emph{asymptotically scalable} if there exists a solution for every $\delta > 0$.
\end{definition}

An important feature is that if~$Q$ is asymptotically scalable, with~$P(\delta)$ approximate rescalings, then there is a unique limit~$P_\star = \lim_{\delta \to 0} P(\delta)$~\cite{IrelandKullback1968,csiszarIDivergenceGeometryProbability1975}.
In the case where~$Q$ is scalable, this was established in increasing levels of generality: in~\cite{Sinkhorn} for entrywise positive matrices and uniform target marginals, in~\cite{sinkhornDiagonalEquivalenceMatrices1967} for entrywise positive and arbitrary target marginals, in~\cite{brualdiDiagonalEquivalenceNonnegative1966,SinkhornKnopp} for (indecomposable) non-negative matrices and uniform target marginals, and in~\cite{menonMatrixLinksExtremization1968,brualdiDADTheoremArbitrary1974} for (indecomposable) non-negative matrices and non-uniform target marginals.
In the general asymptotically scalable case with arbitrary support and target marginals, uniqueness follows from~\cite{csiszarIDivergenceGeometryProbability1975}, through the formulation in terms of KL divergence optimization, as briefly discussed in the introduction.

We will now review the optimization viewpoint in some more detail; see also \cite{LeeNotes}.
To this end, we define the transportation polytope, which consists of saturating flows for the following network.
We have a single source vertex~$S$ and target vertex~$T$, input vertices~$i$, and output vertices~$j$.
There are edges~$S \to i$, $j \to T$, $i \to j$ with capacities~$r_i$, $s_j$, $C_{ji}$ respectively.
Then the \emph{transportation polytope}~$\transport(r,s|C)$ is defined by
\begin{equation}\label{eq:transpoly}
  \transport(r,s|C) = \left\{ P \in \RR_{\geq 0}^{N \times N} \;:\; \sum_j P_{ji} = r_i, \, \sum_i P_{ji} = s_j, \, P_{ji} \leq C_{ji} \text{ for all } j,i \right\}.
\end{equation}
We similarly define~$\transport(r,s)$ but without the capacity constraints on the edges~$i\to j$:
\[
  \transport(r,s) = \left\{ P \in \RR_{\geq 0}^{N \times N} \;:\; \sum_j P_{ji} = r_i, \, \sum_i P_{ji} = s_j, \text{ for all } j,i \right\}.
\]
The connection to the matrix scaling problem is then through the following theorem:

\begin{theorem}[\cite{csiszarIDivergenceGeometryProbability1975}]
\label{thm:sinkhorn-limit-as-KL-projection}
    Let~$Q \in \RR_{\geq 0}^{N \times N}$ and~$r, s \in \RR_{\geq 0}^N$.
    Then~$Q$ is asymptotically scalable to marginals~$(r,s)$ if and only if there exists $P \in \transport(r,s)$ such that $P_{ji}=0$ whenever $Q_{ji}=0$, or equivalently $\KL(P\Vert Q) < \infty$.
    In this case, the convex program
    \[
        \min_{P \in \transport(r,s)} \KL(P \Vert  Q)
    \]
    has a unique minimizer~$P_\star$.
    It is the unique limit of any sequence of $\delta$-$(r,s)$-scalings of~$Q$ with~$\delta \to 0$.
    Moreover, the iterates of the Sinkhorn scaling algorithm converge to~$P_\star$.
\end{theorem}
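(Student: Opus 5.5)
We prove \cref{thm:sinkhorn-limit-as-KL-projection} through the convex program $\min_{P\in\transport(r,s)}\KL(P\Vert Q)$ and its dual, which is exactly the matrix scaling problem. Throughout, we restrict attention to the support $E\coloneqq\{(j,i):Q_{ji}>0\}$.

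\emph{Existence and uniqueness.} Suppose some $P\in\transport(r,s)$ has $\KL(P\Vert Q)<\infty$. Then the set $\transport(r,s)\cap\{P: P_{ji}=0 \text{ off } E\}$ is a nonempty compact polytope. On it, $\KL(\cdot\Vert Q)$ is continuous, being a finite sum of the functions $x\mapsto x\ln(x/Q_{ji})$ with $Q_{ji}>0$, and it is strictly convex. Hence a unique minimizer $P_\star$ exists. Uniqueness can also be read off from \eqref{eq:strong convex} after normalizing by the total mass $\sum_i r_i$, which is the same for all elements of $\transport(r,s)$.

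\emph{Asymptotic scalability $\Leftrightarrow$ feasibility.} The direction ($\Leftarrow$) follows from the convergence statement below. For ($\Rightarrow$), take $\delta$-scalings $P(\delta)$, whose support is contained in $E$. Their total mass is bounded, so a subsequence converges to some $P'$. In the limit the marginal defect tends to zero, so $P'\in\transport(r,s)$, and $P'$ is supported in $E$ because the set of matrices supported in $E$ is closed.

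\emph{Identification of the limit.} The key identity is a Pythagorean relation. Let $P=D_yQD_x$ be any scaling with $D_x=\diag(e^{x_i})$ and $D_y=\diag(e^{y_j})$, and let $R$ be supported in $E$. Then
\[
\KL(R\Vert Q)=\KL(R\Vert P)+\sum_{j,i}R_{ji}(x_i+y_j).
\]
The last term depends on $R$ only through its marginals. Now compare $R=P_\star$ with $R$ ranging over $\transport(r,s)$: because $P_\star$ minimizes $\KL(\cdot\Vert Q)$ over $\transport(r,s)$, the first-order optimality conditions on the face containing $P_\star$ should show that $\ln(P_\star/Q)$ is, on the support of $P_\star$, a limit of functions of the form $x_i+y_j$. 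In other words, $P_\star$ lies in the closure of the set of scalings.

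To get convergence of an arbitrary sequence of $\delta$-scalings, I would instead argue as follows. Any limit point $P'$ of $\delta$-scalings lies in $\transport(r,s)$, and I claim $\KL(P'\Vert Q)\le \KL(R\Vert Q)$ for every feasible $R$. Apply the identity to $R$ and to $P(\delta)$ itself:
\[
\KL(R\Vert Q)-\KL(P(\delta)\Vert Q)=\KL(R\Vert P(\delta))+\sum_{j,i}\bigl(R_{ji}-P(\delta)_{ji}\bigr)(x_i+y_j).
\]
The first term on the right is nonnegative. The second term is at most $\delta\cdot\max(|x|,|y|)$, because the marginals of $R$ and $P(\delta)$ differ by at most $\delta$ in $\ell^1$.

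\emph{Main obstacle.} The scaling vectors $x,y$ may diverge as $\delta\to0$, so the error term $\delta\cdot\max(|x|,|y|)$ need not vanish. This is the delicate step. My first approach is to normalize the dual variables by subtracting constants and decomposing along the connected components of the bipartite support graph, and then to show that one can choose $\delta$-scalings with $\delta\max(|x|,|y|)\to0$. Alternatively, I would bypass this issue using Csisz\'ar's $I$-projection theory: $P_\star$ is the $I$-projection of $Q$ onto the linear family $\transport(r,s)$, and the closure of the exponential family $\{D_yQD_x\}$ meets that linear family exactly at $P_\star$. This gives uniqueness of the limit of any sequence of $\delta$-scalings.

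\emph{Sinkhorn convergence.} Each Sinkhorn half-step is an $I$-projection onto one of the two affine marginal constraints. Csisz\'ar's alternating-projection argument then gives a telescoping Pythagorean identity: $\KL(P_\star\Vert X^{(t)})$ is nonincreasing and the marginal defects are summable. Consequently the iterates are $\delta_t$-scalings with $\delta_t\to0$, and by the previous step they converge to $P_\star$.
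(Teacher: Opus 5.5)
The paper gives no proof of this statement; it is quoted from Csisz\'ar. Measured against the standard $I$-projection argument, several of your steps are fine up to small repairs: existence and uniqueness, the compactness argument for ($\Rightarrow$), and the telescoping bound showing that Sinkhorn iterates have vanishing marginal defect (which gives ($\Leftarrow$)).

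The genuine gap is the identification step. You never show that a limit point $P'$ of $\delta$-scalings equals $P_\star$, and both the ``any sequence'' claim and the convergence of Sinkhorn to $P_\star$ rest on it. Comparing $R$ with $P(\delta)$ itself leaves the term $\sum_{j,i}(R_{ji}-P(\delta)_{ji})(x_i+y_j)$, and your bound $\delta\cdot\max(\norm{x}_\infty,\norm{y}_\infty)$ really does fail. Take $Q=\begin{pmatrix}1&1\\0&1\end{pmatrix}$, $r=s=(\frac12,\frac12)$ and $P(\delta)=\begin{pmatrix}1/2&e^{-1/\epsilon}\\0&1/2-\epsilon\end{pmatrix}$. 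Then $\delta=2\epsilon$, while every representation has $x_1-x_2=\frac1\epsilon-\ln 2$. Hence $\delta\norm{x}_\infty\ge 1-\epsilon\ln2$, even though these scalings do converge to $P_\star$.

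Neither remedy closes the gap. Selecting favorable $\delta$-scalings would only show that \emph{some} sequence converges to $P_\star$, whereas the theorem asserts this for \emph{every} sequence. And ``the closure of the scalings meets $\transport(r,s)$ exactly at $P_\star$'' is a restatement of what must be proved. Two smaller points: $\KL(R\Vert P(\delta))$ can be negative, since $P(\delta)$ need not have the total mass of $R$. Also, with zero target marginals the Sinkhorn iterates contain zero scaling factors, so they are only limits of $\delta$-scalings.

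The missing idea is to apply your identity to two elements of $\transport(r,s)$ against the \emph{same} scaling, so that the dual term cancels instead of having to be estimated. Let $P_k=D_{y^{(k)}}QD_{x^{(k)}}$ be $\delta_k$-scalings with $\delta_k\to0$ and $P_k\to P'$. Then for every $R\in\transport(r,s)$ supported in $E$,
\[ \KL(R\Vert Q)-\KL(R\Vert P_k)=\sum_i r_i x^{(k)}_i+\sum_j s_j y^{(k)}_j=\KL(P'\Vert Q)-\KL(P'\Vert P_k). \]
Only entries with $P'_{ji}>0$ enter $\KL(P'\Vert P_k)$, and there $(P_k)_{ji}\to P'_{ji}>0$, so $\KL(P'\Vert P_k)\to0$. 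Lower semicontinuity gives $\liminf_k\KL(R\Vert P_k)\ge\KL(R\Vert P')$. Hence $\KL(P'\Vert Q)+\KL(R\Vert P')\le\KL(R\Vert Q)$.

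Take $R=P_\star$ and use $\KL(P'\Vert Q)\ge\KL(P_\star\Vert Q)$; this yields $\KL(P_\star\Vert P')\le 0$. Since both have total mass $\sum_i r_i$, it follows that $P'=P_\star$. Boundedness then gives convergence of every sequence of $\delta$-scalings.

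The same cancellation argument applies to the Sinkhorn iterates. For $t\ge2$, on the support of $X^{(t)}$ one still has $\ln(X^{(t)}/Q)=a_i+b_j$ with finite $a,b$. Moreover, every feasible $R$ supported in $E$ lives on that support.
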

We shall use the following incarnation of the max-flow min-cut theorem later on, to establish non-emptiness of the polytope~$\qtransport(\rho, U)$ used for defining the capped Schr\"odinger theory:
\begin{theorem}[\cite{fordFulkerson1956,eliasFeinsteinShannon1956}]
  \label{thm:max-flow min-cut}
  Let~$C \in \RR_{\geq 0}^{N \times N}$ and~$r,s \in \RR_{\geq0}^N$ with~$\sum_i r_i = \sum_j s_j = V$.
  The transportation polytope~$\transport(r,s|C)$ is non-empty if and only if for every~$I,J \subseteq [N]$,
  \[
     r(I) + s(J) \leq V + C(I,J)
  \]
  where $r(I) \coloneqq \sum_{i \in I} r_i$, $s(J) \coloneqq \sum_{j \in J} s_j$, and $C(I,J) \coloneqq \sum_{i\in I, j \in J} C_{ji}$.
\end{theorem}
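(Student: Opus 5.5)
The plan is to reduce the statement to the classical max-flow min-cut theorem on the network described just before \eqref{eq:transpoly}: a source~$S$, input vertices~$i \in [N]$, output vertices~$j \in [N]$ and a sink~$T$, with edges $S\to i$ of capacity~$r_i$, $i \to j$ of capacity~$C_{ji}$, and $j \to T$ of capacity~$s_j$. The key observation is that elements of~$\transport(r,s|C)$ correspond exactly to feasible $S$--$T$ flows of value~$V$, via $P_{ji} = $ flow on edge $i\to j$. Indeed, a flow of value~$V$ must saturate every edge out of~$S$ (their total capacity is $\sum_i r_i = V$) and every edge into~$T$. Flow conservation at~$i$ and~$j$ then gives exactly the two marginal constraints. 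Conversely, any $P \in \transport(r,s|C)$ defines such a flow. Hence the polytope is non-empty if and only if the maximum flow equals~$V$, i.e., if and only if every $S$--$T$ cut has capacity at least~$V$.

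I would first prove necessity directly, since it is a one-line computation. Given $P \in \transport(r,s|C)$ and $I,J\subseteq[N]$, write~$J^c$ for the complement of~$J$. Then
\[
r(I) = \sum_{i\in I}\sum_{j} P_{ji} = \sum_{i\in I, j\in J} P_{ji} + \sum_{i \in I, j \in J^c} P_{ji} \le C(I,J) + s(J^c) = C(I,J) + V - s(J).
\]

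For sufficiency, I would enumerate all finite-capacity cuts. A cut is determined by the set~$I$ of input vertices on the source side and the set~$J'$ of output vertices on the source side. Its capacity is
\[
\sum_{i\notin I} r_i + \sum_{i\in I,\, j\notin J'} C_{ji} + \sum_{j\in J'} s_j .
\]
Substituting $J \coloneqq [N]\setminus J'$, this equals $(V - r(I)) + C(I,J) + (V - s(J))$. The requirement that this be at least~$V$ is precisely the stated inequality $r(I)+s(J)\le V + C(I,J)$. Therefore the hypothesis says every cut has capacity at least~$V$. By max-flow min-cut (for real capacities, the maximum is attained by compactness of the flow polytope), there is a flow of value~$V$, and hence an element of~$\transport(r,s|C)$.

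The only step requiring care is the bookkeeping in the cut enumeration: which edges cross from the source side to the sink side, and the change of variables $J = [N]\setminus J'$. This is where I expect the main obstacle to be. The rest is the standard correspondence between saturating flows and the polytope. Alternatively, one could avoid graph language and derive the same condition from LP duality (Farkas' lemma) applied to~\eqref{eq:transpoly}. In that approach the extreme dual solutions are $0/1$-valued and again indexed by pairs~$(I,J)$, but the flow formulation is more transparent.
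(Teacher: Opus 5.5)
Your proposal is correct and follows the route the paper intends. The paper gives no separate proof: it cites max-flow min-cut for exactly the network $S\to i\to j\to T$ with capacities $r_i$, $C_{ji}$, $s_j$ described before the definition of $\transport(r,s|C)$. Your identification of value-$V$ flows with elements of the polytope, together with the cut computation $(V-r(I)) + C(I,J) + (V-s(J)) \geq V$, supplies precisely the bookkeeping that the citation leaves implicit.
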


\section{Counterexample to Robustness of Schr\"odinger Theory}\label{sec:counterexample}
We first present a counterexample to the robustness of Sinkhorn scaling.
Then we lift it to obtain a counterexample to robustness of the Schr\"odinger hidden-variable theory by constructing corresponding unitaries and states, establishing \cref{thm:no go}.

For $n\geq 2$ and $0<\delta<1$, let
\[
 M_\delta=
 \begin{pmatrix}
 1&1&&&\\
 &1&1&&\\
 &&\ddots&\ddots&\\
 &&&1&1\\
 \delta&&&&1
 \end{pmatrix}\in\mathbb R_{\ge0}^{n\times n},
\]
where blank entries are zero.

\begin{lemma}\label{lem:scale}
The (unique) doubly stochastic scaling of $M_\delta$ is given by
\[
 T_\delta=
 \begin{pmatrix}
 t_\delta&1-t_\delta&&&\\
 &t_\delta&1-t_\delta&&\\
 &&\ddots&\ddots&\\
 &&&t_\delta&1-t_\delta\\
 1-t_\delta&&&&t_\delta
 \end{pmatrix},
 \qquad
 t_\delta=\frac{1}{1+\delta^{1/n}}.
\]
\end{lemma}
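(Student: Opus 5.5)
The plan is to exhibit explicit positive scaling vectors that turn $M_\delta$ into $T_\delta$, and then to appeal to uniqueness. Since every entry of $T_\delta$ lies in $(0,1)$, the matrix $T_\delta$ is doubly stochastic. Its support coincides with that of $M_\delta$: the superdiagonal entries $(j,j+1)$, the diagonal entries $(j,j)$, and the corner entry $(n,1)$. Once $T_\delta = \diag(e^{y})\,M_\delta\,\diag(e^{x})$ is established, $M_\delta$ is exactly scalable. Then $T_\delta\in\transport(1_n,1_n)$ satisfies $\KL(T_\delta\Vert M_\delta)<\infty$. By \cref{thm:sinkhorn-limit-as-KL-projection}, the KL-projection $P_\star$ is unique and equals the limit of any sequence of scalings. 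An exact scaling is such a sequence (constant in $\delta'$), so $P_\star=T_\delta$. This gives the uniqueness asserted in the statement, in the sense of the scaling limit; any exact doubly stochastic scaling is a constant sequence of $0$-scalings and hence equals $P_\star$.

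To find the scaling, write row factors $a_j>0$ and column factors $b_i>0$, so that $(T)_{ji}=a_j (M_\delta)_{ji} b_i$. The diagonal equations require $a_jb_j=t$ for all $j$. The superdiagonal equations require $a_jb_{j+1}=1-t$ for $j<n$. The corner equation requires $a_n\delta b_1=1-t$. The first two families give $b_{j+1}/b_j=(1-t)/t\eqqcolon\lambda$, so $b_j=\lambda^{j-1}b_1$ and $a_j=t/b_j$. Substituting into the corner equation gives
\[
\delta\, \frac{t}{\lambda^{n-1}} = 1-t,\quad\text{i.e.}\quad \delta=\lambda^{n}.
\]
This forces $\lambda=\delta^{1/n}$, hence $t=1/(1+\delta^{1/n})$, which is the claimed $t_\delta$. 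Conversely, with this $t$ and the choices $b_1=1$, $b_j=\lambda^{j-1}$, $a_j=t\lambda^{-(j-1)}$, all $2n-1$ entry equations hold, so $T_\delta$ is indeed a scaling of $M_\delta$.

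The only point needing care is the uniqueness justification, since the calculation above is routine. I would rely on the uniqueness of the KL minimizer (strict convexity) from \cref{thm:sinkhorn-limit-as-KL-projection} rather than on an indecomposability argument. Alternatively, one can argue directly. The cycle structure shows that any doubly stochastic matrix supported on this support, with entries $s_j$ on the diagonal, must have the constant form $t,1-t$: row and column sums propagate around the cycle. The scaling condition then pins down $t$ exactly as in the computation above.
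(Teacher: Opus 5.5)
Your proof is correct. The existence part matches the paper's. Your scaling vectors $a_j=t_\delta\lambda^{-(j-1)}$, $b_i=\lambda^{i-1}$ are the paper's $\diag(1,\delta^{-1/n},\dots,\delta^{-(n-1)/n})$ and $\diag(t_\delta,t_\delta\delta^{1/n},\dots)$, with the factor $t_\delta$ moved to the other side. Your corner equation $\delta=\lambda^n$ is the paper's cross-ratio identity $\chi(M_\delta)=\chi(T_\delta)$ in disguise: multiplying the $n$ off-diagonal entry equations and dividing by the $n$ diagonal ones cancels every scaling factor.

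The genuine difference is how you get uniqueness. The paper argues elementarily. Comparing row $i$ with column $i+1$, any doubly stochastic matrix with the support of $M_\delta$ must have constant diagonal $t$ and constant cyclic off-diagonal $1-t$. The scaling-invariant cross-ratio then forces $t=t_\delta$. So there is at most one exact scaling, without any appeal to general theory. You instead invoke \cref{thm:sinkhorn-limit-as-KL-projection} through the constant-sequence argument, which is valid. It uses heavier machinery, but it has a side benefit: it identifies $T_\delta$ directly with the Sinkhorn limit, which is the object defining $\PST$. The paper only brings in that identification later, when it lifts the lemma to $\PST(\rho,U_\delta)$ by citing ``the uniqueness of the matrix scaling.'' Your fallback ``direct'' argument (cycle propagation plus the scaling equations) is exactly the paper's proof.

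Two small slips. There are $2n$ entry equations ($n$ diagonal, $n-1$ superdiagonal, one corner), not $2n-1$. Also, $T_\delta$ is doubly stochastic because each row and each column contains exactly one $t_\delta$ and one $1-t_\delta$. That all entries lie in $(0,1)$ only gives nonnegativity and the same support as $M_\delta$.
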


\begin{proof}
Matrix scaling will never change the support of the starting matrix~$M_\delta$. 
On the other hand, any doubly stochastic matrix with the same support as~$M_\delta$ must have a common diagonal entry~$t_\delta\in(0,1)$ and a common cyclic off-diagonal entry~$1-t_\delta$: for~$1\leq i<n$, comparing row~$i$ with column~$i+1$ forces successive diagonal entries to agree.
Thus, if~$M_\delta$ has a doubly stochastic scaling, it must have the displayed form of~$T_\delta$ for some value of~$t_\delta$.
To determine the latter, we use that the cross-ratio
\[
 \chi(M)=\frac{M_{1,2}M_{2,3}\cdots M_{n-1,n}M_{n,1}}
 {M_{1,1}M_{2,2}\cdots M_{n,n}}
\]
is invariant under row and column scaling (indeed, both the numerator and the denominator change by the same factor: the product of all scaling factors).
Thus, if $T_\delta$ is a scaling of $M_\delta$, we must have
\begin{align*}
    \delta = \chi(M_\delta) = \chi(T_\delta) = \parens*{ \frac{1-t_\delta}{t_\delta} }^n,
\end{align*}
which gives the stated value of~$t_\delta$.
Finally, we observe that $T_\delta$ is indeed a scaling of $M_\delta$:
\begin{equation*}
    \diag(1,\delta^{-1/n},\dots,\delta^{-(n-1)/n}) M_\delta \diag(t_\delta, t_\delta \delta^{1/n}, \dots, t_\delta \delta^{(n-1)/n})
= T_\delta.
\end{equation*}
This concludes the proof.
\end{proof}

Taking $\delta \coloneqq  2^{-n}, \delta' \coloneqq  2^{-3n}$, we see that an exponentially small perturbation leads to a change of constant size in the scaling limit:
$\mnorm{M_\delta-M_{\delta'}} < 2^{-n}$, while $\mnorm{T_\delta-T_{\delta'}} = 2/9$.
We next lift this to construct pairs $(U,\rho)$ giving a counterexample to robustness of the Schr\"odinger theory.

To motivate the construction, note that there is no unitary with the same support as~$M_\delta$ (if $n\geq3$).
This is because adjacent columns of the latter overlap in only one coordinate; hence it is not possible to choose phases to make them orthogonal.
We will therefore enlarge the dimension to lift the counterexample.
The key idea is to replace each entry of~$M_\delta$ by a multiple of the $2\times 2$ matrix~$J = \begin{psmallmatrix}1 & 1\\1 & 1\end{psmallmatrix}$.
Letting $\ket\pm = (\ket0 \pm \ket1)/\sqrt2$ denote the Hadamard basis states, we will lift the diagonal $J$ blocks to the rank-one projection $\proj+$, and the offdiagonal $J$ blocks to $\proj-$.
Because these two projections are mutually orthogonal and both have entrywise modulus~$J/2$, this \emph{almost} achieves the desired lift; we can complete the construction using one more dimension.

Formally, let $N=2n+1$ and consider the $N$-dimensional Hilbert space~$\C^N \cong ( \C^n \ot \C^2 ) \op \C \ket\bot$, with orthonormal standard (or ``computational'') basis
$\{\ket{x,0},\ket{x,1}:1\le x\le n\}\cup\{\ket{\bot}\}$.
For $0<\delta<1$, define a unitary~$U_\delta$ by
\begin{align*}
 U_\delta\ket{x,+}&=\ket{x,+} &&(1\le x\le n),\\
 U_\delta\ket{x,-}&=\ket{x-1,-} &&(2\le x\le n),\\
 U_\delta\ket{1,-}&=\delta\ket{n,-}+\sqrt{1-\delta^2}\ket{\bot},\\
 U_\delta\ket{\bot}&=\sqrt{1-\delta^2}\ket{n,-}-\delta\ket{\bot},
\end{align*}
where $\ket{x,\pm} = \ket x \ot \ket\pm = ( \ket{x,0}\pm\ket{x,1} ) / \sqrt2$.
The right-hand side vectors form an orthonormal basis, so $U_\delta$ is real orthogonal; in particular, it is unitary.
As the initial state we choose the uniform superposition of the first $2n$ coordinates:
\[
  \rho=\ket{\psi}\!\bra{\psi}, \qquad \ket{\psi}=\frac1{\sqrt n}\sum_{x=1}^n\ket{x,+}.
\]
Clearly, this state is fixed by the unitary~$U_\delta$, and the Born marginals are the uniform distribution on the first $2n$ coordinates.
The entrywise modulus of the $2n \times 2n$ submatrix obtained by restricting~$U_\delta$ to this ``active'' subspace is given by
\begin{align*}
    \abs{U_\delta}_{\text{act}} = \frac12 M_\delta \ot J.
\end{align*}
Indeed, each nonzero block receives exactly one contribution from either the plus or minus vector within a pair, and the corresponding rank-one projections have entrywise modulus proportional to~$J/2$, as noted above.
By \cref{lem:scale} and the uniqueness of the matrix scaling, we must have
\begin{equation}\label{eq:Pdelta}
    \PST(\rho, U_\delta)_{\text{act}} = \frac1{4n} T_\delta \ot J
\end{equation}
(the scaling factors lift to the two coordinates in each pair, and there is an overall factor~$1/(2n)$ because the Born marginals are uniform on the active subspace, rather than equal to the all-ones vector).
The last row and column of $\PST(\rho,U_\delta)$ are all zero, because the corresponding Born marginals vanish.
Thus we have fully determined the joint distribution assigned by Schr\"odinger theory to $\rho$ and $U_\delta$.

\begin{theorem}\label{thm:instability}
For every $n\ge2$, let $N=2n+1$, $\delta=2^{-n}$, and $\delta'=2^{-3n}$.
Then the real orthogonal matrices $U_\delta,U_{\delta'}$ have the same support (zero pattern w.r.t.\ the standard basis), both fix the state~$\rho$, and they satisfy
\begin{equation}\label{eq:main-separation}
 \mnorm{U_\delta-U_{\delta'}} < 2^{-n},\qquad
 \mnorm{\PST(\rho,U_\delta)-\PST(\rho,U_{\delta'})} > \frac1{10N}.
\end{equation}
Moreover, $\PST(\rho,U_\delta)$ and $\PST(\rho,U_{\delta'})$ have constant total-variation distance~$2/9$.
\end{theorem}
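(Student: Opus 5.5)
The plan is to read off every claim from the explicit description of $U_\delta$ and from \cref{eq:Pdelta}; the only real work is elementary estimates in $\delta$.

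\emph{Support, fixed state, and closeness of the unitaries.} Write $U_\delta$ in the computational basis. On each pair $\{\ket{x,0},\ket{x,1}\}$ the $\ket{x,+}$ part contributes $\proj{+}$ in the diagonal block. For $x\ge2$, the $\ket{x,-}\mapsto\ket{x-1,-}$ part contributes $\ketbra{-}{-}$ in block $(x-1,x)$. The column $\ket{1,-}$ and the row/column $\ket\bot$ contribute $\delta\ketbra{n,-}{1,-}$, $\sqrt{1-\delta^2}\ketbra{\bot}{1,-}$, $\sqrt{1-\delta^2}\ketbra{n,-}{\bot}$ and $-\delta\proj\bot$. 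Every entry is therefore either independent of $\delta$ or a fixed nonzero multiple of $\delta$ or of $\sqrt{1-\delta^2}$. Both quantities are nonzero for $0<\delta<1$, so $U_\delta$ and $U_{\delta'}$ have the same support.

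The fact that $U_\delta\ket\psi=\ket\psi$ is already noted in the text. For the max-norm, the entrywise difference is bounded by the larger of $\abs{\delta-\delta'}\cdot\frac12$ (the $\ket{n,-}\bra{1,-}$ entries have modulus $\delta/2$), $\abs{\delta-\delta'}$ (the $\bot\bot$ entry), and $\abs{\sqrt{1-\delta^2}-\sqrt{1-\delta'^2}}/\sqrt2$. The last is at most $(\delta^2-\delta'^2)/\sqrt2<\delta^2$, using $\sqrt{1-a}-\sqrt{1-b}\le b-a$ for $a\le b\le1$. Each of these is $<\delta=2^{-n}$.

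\emph{Separation in the joint distributions.} By \cref{eq:Pdelta}, and since the last row and column vanish, the difference $\PST(\rho,U_\delta)-\PST(\rho,U_{\delta'})$ is supported on the active block. It equals $\frac1{4n}(T_\delta-T_{\delta'})\ot J$. Every nonzero entry of $T_\delta-T_{\delta'}$ has modulus $\abs{t_\delta-t_{\delta'}}$. With $\delta^{1/n}=1/2$ and $\delta'^{1/n}=1/8$ we get $t_\delta=2/3$ and $t_{\delta'}=8/9$, so the difference is $2/9$. Hence
\[
\mnorm{\PST(\rho,U_\delta)-\PST(\rho,U_{\delta'})}=\frac{2}{9\cdot4n}=\frac1{18n}.
\]
Finally, $\frac1{18n}>\frac1{10(2n+1)}$ is equivalent to $20n+10>18n$, which holds.

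\emph{Total variation.} $T_\delta-T_{\delta'}$ has $2n$ nonzero entries, each of modulus $2/9$, and $J$ has four entries equal to $1$. Therefore
\[
\dTV=\frac12\cdot\frac1{4n}\cdot 4\cdot 2n\cdot\frac29=\frac29.
\]

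\emph{Main obstacle.} The only delicate point is justifying \cref{eq:Pdelta}, which the preceding text already derives. I would make it rigorous as follows. Take the scaling of $M_\delta$ from \cref{lem:scale}, given by $D_1M_\delta D_2=T_\delta$. Lift it to $(D_1\ot I_2)\,\abs{U_\delta}_{\text{act}}\,(D_2\ot I_2)$ and multiply by $\frac1{2n}$. The result is an exact scaling of $\abs{U_\delta}$ with the correct Born marginals, padded by a zero row and column for $\bot$; these are consistent with the marginals $p_\bot=q_\bot=0$. By \cref{thm:sinkhorn-limit-as-KL-projection}, an exact scaling in $\transport(p,q)$ coincides with the unique limit $\PST(\rho,U_\delta)$.
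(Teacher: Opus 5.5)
Your proposal is correct and follows essentially the same route as the paper: $t_\delta=2/3$ and $t_{\delta'}=8/9$ come from \cref{lem:scale}, the max-norm and total-variation computations come from \cref{eq:Pdelta}, and $U_\delta-U_{\delta'}$ is bounded entrywise using that the $\delta$-dependent rank-one pieces have disjoint supports. Two harmless imprecisions: first, $\sqrt{1-a}-\sqrt{1-b}\le b-a$ fails for general $a\le b\le 1$ (it needs $\sqrt{1-a}+\sqrt{1-b}\ge 1$), but it holds here since $\delta^2\le 1/16$; second, the zero-padded matrix in your justification of \cref{eq:Pdelta} is not an exact scaling but only a limit of scalings (the $\bot$ scaling factors must tend to $0$), which \cref{thm:sinkhorn-limit-as-KL-projection} still covers.
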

\begin{proof}
By \cref{lem:scale}, $t_\delta = \frac23$ and $t_{\delta'} = \frac89$.
Then \cref{eq:Pdelta} and the fact that the joint distributions are zero outside the active subspace yield
\[
  \mnorm{\PST(\rho,U_\delta)-\PST(\rho,U_{\delta'})}
= \frac {\mnorm{T_\delta - T_{\delta'}}} {4n}
= \frac {\abs{t_\delta - t_{\delta'}}} {4n}
= \frac1{18n}
> \frac1{10N}.
\]
Using \cref{eq:Pdelta}, we can also compute the total-variation distance:
\begin{align*}
    \dTV\mleft( \PST(\rho,U_\delta), \PST(\rho,U_{\delta'}) \mright)
= \frac1{2n} \sum_{j,i} \abs*{ (T_\delta)_{ji} - (T_{\delta'})_{ji} }
= \abs{t_\delta - t_{\delta'}}
= \frac29.
\end{align*}
It remains to verify the claimed properties of the unitaries and bound their distance.
Since $0<\delta,\delta'<1$, $U_\delta$ and $U_{\delta'}$ have the same support.
Moreover, they both fix the uniform superposition of the first $2n$ basis states.
Finally we verify the upper bound on their distance.
To this end, note that
\begin{align*}
    U_\delta - U_{\delta'}
= (\delta - \delta') \parens*{ \ketbra{n,-}{1,-}-\ketbra{\bot}{\bot} }
+ \parens*{ \sqrt{1-\delta^2}-\sqrt{1-{\delta'}^2} }
 \parens*{ \ketbra{\bot}{1,-} + \ketbra{n,-}{\bot} }.
\end{align*}
The four rank-one matrices in this formula have pairwise disjoint support, so we see that
\begin{align*}
    \mnorm{U_\delta - U_{\delta'}}
= \max \braces*{
    \abs{\delta - \delta'},
    \frac {\abs{\delta - \delta'}} 2,
    \frac {\abs{\sqrt{1-\delta^2}-\sqrt{1-{\delta'}^2}}} {\sqrt 2}
  }
= \abs{\delta - \delta'}
< 2^{-n},
\end{align*}
where the second equality uses that $\delta,\delta'\leq1/2$.
\end{proof}

\begin{remark}
In the above counterexample, the Born marginal of~$\rho$ assigns probability zero to~$\ket{\bot}$, and one might wonder whether robustness can be proven under the non-degeneracy condition that all marginals must be strictly positive.
This is not the case---it is possible to modify the above counterexample so that all marginals are strictly positive, e.g., by  adding a small maximally-mixed component to the density matrix: $\rho_\eps = (1-\eps) \rho + \eps I/N$ for suitable $\eps>0$.
\end{remark}

Now we prove our first main result, failure of robustness, which follows readily from the above.

\begin{proof}[Proof of \cref{thm:no go}]
Suppose, for contradiction, that the Schr\"odinger theory is block-robust.
Then, for $r(N) \coloneqq 10N$, there exists a positive polynomial $s(N)$ such that
\begin{align}\label{eq:contradict}
    \mnorm{U_\delta - U_{\delta'}} \leq \frac 1 {s(N)}
 \quad\Longrightarrow\quad
    \mnorm{\PST(\rho, U_\delta) - \PST(\rho, U_{\delta'})} \leq \frac 1 {10N}
\end{align}
for every $N=2n+1$, with $n\geq2$, and the choice $\delta=2^{-n}$ and $\delta'=2^{-3n}$ in \cref{thm:instability} (which also states that the two unitaries have the same support and hence the same minimal blocks).
But for $N=2n+1$ sufficiently large, we have $s(N) \leq 2^n$, and \cref{eq:contradict} is violated by the norm bounds in \cref{thm:instability}.
This is the desired contradiction.
\end{proof}

\section{A Capped Version of Schr\"odinger Theory}\label{sec:capped}
In this section we present a modification of the Schr\"odinger theory which satisfies symmetry, indifference, robustness and distributional product commutativity (\cref{thm:yay}).

For our \emph{capped Schr\"odinger theory}, we will first require some definitions.
For an $N \times N$ density matrix~$\rho$ and unitary~$U$, write
\begin{equation}\label{eq:p q C}
  p_i = \rho_{ii}, \quad q_j = (U \rho U^\dagger)_{jj}, \quad C_{ji} = 2 \abs{U_{ji}} \sqrt{q_j p_i}.
\end{equation}
The set of joint distributions with marginals $p$ and $q$, and entrywise upper bound~$C$ is denoted by
\[
    \qtransport(\rho, U)
\coloneqq \transport(p,q|C)
= \braces*{ P \in \RR_{\geq 0}^{N \times N} : \sum_j P_{ji} = p_i, \, \sum_i P_{ji} = q_j, \, P_{ji} \leq C_{ji} \text{ for all } j,i }.
\]
As explained above \cref{eq:transpoly}, we can also interpret the elements of $\qtransport(\rho,U)$ as flows on a bipartite network with supplies~$p$ and demands~$q$, and with capacity constraint~$C$.

\begin{lemma}[Feasibility]\label{lem:C-rho-u feasible}
For every density matrix~$\rho$ and every unitary~$U$, the set $\qtransport(\rho, U)$ is nonempty.
\end{lemma}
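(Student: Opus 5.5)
The plan is to verify the cut condition of \cref{thm:max-flow min-cut}. Here $V=\sum_i p_i=\Tr\rho=1=\sum_j q_j$, so it suffices to show that for all $I,J\subseteq[N]$,
\[
  p(I)+q(J)\le 1+C(I,J),\qquad C(I,J)=2\sum_{i\in I,\,j\in J}\abs{U_{ji}}\sqrt{q_jp_i}.
\]
I would rewrite both sides in operator language. Let $A\coloneqq\sum_{i\in I}\proj i$ and $B\coloneqq U^\dagger\bigl(\sum_{j\in J}\proj j\bigr)U$; both are orthogonal projections. Then $p(I)=\Tr(A\rho)$ and $q(J)=\Tr(B\rho)$. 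To treat mixed states uniformly, I would purify $\rho$ to a unit vector $\ket\Psi\in\C^N\ot\C^N$ and set $a\coloneqq(A\ot I)\ket\Psi$ and $b\coloneqq(B\ot I)\ket\Psi$. Then $s\coloneqq p(I)+q(J)=\norm a^2+\norm b^2$, and $\braket{a|b}=\Tr(AB\rho)$.

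\emph{Step 1 (geometric inequality).} Since the operators are projections, $\braket{\Psi|a}=\norm a^2$ and $\braket{\Psi|b}=\norm b^2$. Hence
\[
  s=\mathrm{Re}\braket{\Psi|a+b}\le\norm{a+b},\qquad \norm{a+b}^2=s+2\,\mathrm{Re}\braket{a|b}\le s+2x,
\]
where $x\coloneqq\abs{\Tr(AB\rho)}$. This gives $s^2\le s+2x$, and therefore $s\le\tfrac12+\sqrt{\tfrac14+2x}$. Squaring shows $\sqrt{\tfrac14+2x}\le\tfrac12+2x$, so $s\le 1+2x$.

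\emph{Step 2 (bounding $x$ by the capacities).} Expand the trace in the computational basis:
\[
  \Tr(AB\rho)=\sum_{i\in I,\,j\in J}\overline{U_{ji}}\,(U\rho)_{ji}.
\]
Writing $(U\rho)_{ji}=\bra jU\rho^{1/2}\cdot\rho^{1/2}\ket i$ and applying Cauchy--Schwarz gives $\abs{(U\rho)_{ji}}\le\sqrt{(U\rho U^\dagger)_{jj}\,\rho_{ii}}=\sqrt{q_jp_i}$. Summing, $2x\le 2\sum_{i\in I,\,j\in J}\abs{U_{ji}}\sqrt{q_jp_i}=C(I,J)$.

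Combining the two steps yields $p(I)+q(J)\le 1+C(I,J)$ for all $I,J$. Non-emptiness of $\qtransport(\rho,U)$ then follows from \cref{thm:max-flow min-cut}.

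The main obstacle is Step 1. The bound must carry the factor $2$ with no loss, matching the tightness example from the introduction, and it must hold for mixed states even though $C$ is not linear in $\rho$. Working with a purification and the inequality $s^2\le s+2x$ is what handles both issues. I would double-check that nothing degenerates when $s\le 1$; in that case the cut inequality is trivial anyway.
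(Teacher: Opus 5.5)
Your proof is correct and is essentially the paper's: the same max-flow min-cut reduction, the same projections $A=\sum_{i\in I}\proj i$ and $B=U^\dagger\bigl(\sum_{j\in J}\proj j\bigr)U$, and the same Cauchy--Schwarz bound $\abs{(U\rho)_{ji}}\le\sqrt{q_jp_i}$ obtained by splitting $\rho=\rho^{1/2}\rho^{1/2}$. The only difference is Step~1: the paper gets $p(I)+q(J)-1\le 2\operatorname{Re}\Tr(\rho AB)$ in one line from the operator inequality $(A+B-I)^2\succeq0$, whereas your purification argument amounts to the inequality $\bigl(\Tr\rho(A+B)\bigr)^2\le\Tr\rho(A+B)^2$, which is equally valid. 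Note that the purification is not what handles mixed states, since that is already done by the $\rho^{1/2}$ step.
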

\begin{proof}
  Since~$p$ and~$q$ are both probability distributions, the max-flow min-cut theorem (\cref{thm:max-flow min-cut}) shows that it suffices to prove that, for every $I, J \subseteq [N]$,
\begin{align*}
    p(I) + q(J) \leq 1 + C(I,J),
\end{align*}
where $p(I) \coloneqq \sum_{i\in I} p_i$, $q(J) \coloneqq \sum_{j \in J} q_j$, and $C(I,J) \coloneqq \sum_{i \in I,j \in J} C_{ji}$.
To this end, let $P = \sum_{i \in I} \ket i\bra i$ and $Q = \sum_{j \in J} U^\dagger \ket j \bra j U$.
Then,
\begin{align*}
    p(I) + q(J) - 1 = \Tr \rho (P + Q - I).
\end{align*}
On the other hand,
\begin{align*}
    (P + Q - I)^2 = I - P - Q + PQ + QP
\end{align*}
is a positive semidefinite operator, hence $P + Q - I \preceq PQ + QP$ and
\begin{align*}
&\qquad  \Tr \rho (P + Q - I)
\leq \Tr \rho (PQ + QP)
= 2 \sum_{i \in I, j \in J} \operatorname{Re} \ \braket{j | U \rho | i} \braket{i | U^\dagger | j} \\
&\leq 2 \sum_{i \in I, j \in J} \abs{U_{ji}} \cdot \abs{\braket{j | U \sqrt\rho \sqrt\rho | i}}
\leq 2 \sum_{i \in I, j \in J} \abs{U_{ji}} \cdot \sqrt{\braket{j | U \rho U^\dagger | j}} \cdot \sqrt{\braket{i | \rho | i}}
= C(I, J),
\end{align*}
where the last inequality is the Cauchy--Schwarz inequality.
\end{proof}

\begin{definition}[Capped Schr\"odinger theory]
Given any density matrix $\rho$ and unitary~$U$, in any dimension~$N$, we define the joint distribution
\begin{align*}
  \PCST(\rho,U)
\coloneqq \argmin_{P \in \qtransport(\rho,U)} \KL(P \,\Vert\, \abs U)
= \argmin_{P \in \qtransport(\rho,U)} \sum_{\substack{j,i, \\ U_{ji} \neq 0}} P_{ji} \ln \frac {P_{ji}} {\abs{U_{ji}}}.
\end{align*}
This is well-defined because~$\qtransport(\rho,U)$ is nonempty (\cref{lem:C-rho-u feasible}), compact, and convex and the Kullback--Leibler divergence is strictly convex.
Then the \emph{capped Schr\"odinger theory} is defined by
\begin{equation}\label{eq:SCST via PCST}
  \SCST(\rho,U)_{ji} = \begin{cases}
    \PCST(\rho, U)_{ji}/p_i, & p_i>0, \\
    \abs{U_{ji}}^2,&p_i=0.
  \end{cases}
\end{equation}
This is a column stochastic matrix because each column is obtained either by conditioning the joint distribution~$\PCST(\rho,U)$ on an event with nonzero probability ($p_i>0$), or by taking the entrywise absolute value squared of a column of the unitary matrix~$U$.
We note that $\PCST(\rho,U) = \SCST(\rho,U) \diag(\rho_{11},\dots,\rho_{NN})$, so the former is indeed the joint distribution arising from the latter transition matrix.
\end{definition}

Note that in our definition of the Kullback--Leibler divergence we do not correct for the fact that the entries of~$\abs U$ will in general not form a probability distribution.
This is irrelevant for the definition of~$\PCST(\rho,U)$, since~$P$ is always a probability distribution and hence replacing $\abs U$ by its normalization only amounts to a constant additive shift by~$\ln \sum_{j,i} \abs U_{ji}$ of the objective.
Similarly, we can also replace the second argument of the Kullback--Leibler divergence by the capacity matrix~$C$, since
\begin{align}\label{eq:KL absU to KL C}
    \KL(P \,\Vert\, \abs U) - \KL(P \Vert C)
= \sum_{j,i} P_{ji} \ln (2 \sqrt{q_j p_i})
= \ln 2 - \frac {H(p) + H(q)} 2,
\end{align}
where $H(r) \coloneqq \sum_j r_j \ln(1/r_j)$ denotes the Shannon entropy of a probability distribution~$r$.
This replacement will be convenient for the robustness argument.

It is clear from the definition that the capped Schr\"odinger theory is symmetric and indifferent.
For the latter, note that $\PCST(\rho,U)_{ji} = 0 = \abs{U_{ji}}^2$ whenever $U_{ji}=0$, the former by the capacity constraint in the definition of~$\qtransport(\rho, U)$.
Next, we will establish distributional product commutativity.

\begin{lemma}[Distributional product commutativity]\label{lem:cst tensor identity}
Identify $\C^N \cong \C^A \otimes \C^B$ and $\R^N \cong \R^A \otimes \R^B$ using the same ordering of the product bases (e.g., the lexicographic ordering).
Let $\rho = \rho_A \ot \rho_B$ be a product state and $\mathcal U=U_A\otimes I_B$ and $\mathcal V=I_A\otimes V_B$ local unitaries.
\begin{enumerate}
\item[(a)] We have
  \begin{align*}
    \PCST(\rho_A \ot \rho_B, U_A \ot I_B) & = \PCST(\rho_A, U_A) \ot D_{\rho_B}, \\
    \PCST(\rho_A \ot \rho_B, I_A \ot V_B) & = D_{\rho_A} \ot \PCST(\rho_B, V_B),
  \end{align*}
  where $D_\sigma$ denotes the diagonal matrix with the same diagonal entries as $\sigma$.
\item[(b)] Consequently,
  \begin{align*}
      \SCST(\rho_A \ot \rho_B, U_A \ot I_B) (I_A \ot P_{\rho_B}) = \SCST(\rho_A, U_A) \ot P_{\rho_B}, \\
      \SCST(\rho_A \ot \rho_B, I_A \ot V_B) (P_{\rho_A} \ot I_B) = P_{\rho_A} \ot \SCST(\rho_B, V_B),
  \end{align*}
  where $P_\sigma = \sum_{k : \sigma_{kk}>0} \proj k$ is the projection onto the support of~$D_\sigma$.
\item[(c)] The capped Schr\"odinger theory satisfies distributional product commutativity.
\end{enumerate}
\end{lemma}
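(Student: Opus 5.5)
For part (a), the plan is to treat the first identity; the second follows by symmetry of the argument, or by conjugating with the swap permutation and using the symmetry of the theory. Write $p^A,q^A,C^A$ for the data \eqref{eq:p q C} of $(\rho_A,U_A)$ and $p^B=\diag\rho_B$. For $\rho=\rho_A\ot\rho_B$ and $\mathcal U=U_A\ot I_B$ we have $\mathcal U\rho\mathcal U^\dagger=U_A\rho_AU_A^\dagger\ot\rho_B$. Hence the marginals are $p=p^A\ot p^B$ and $q=q^A\ot p^B$, and $\abs{\mathcal U}=\abs{U_A}\ot I_B$. The capacity is
\[
C_{(j,b'),(i,b)} = 2\abs{(U_A)_{ji}}\,\delta_{bb'}\sqrt{q^A_jp^A_ip^B_{b'}p^B_b}=C^A_{ji}\,\delta_{bb'}p^B_b .
\]
Every feasible $P\in\qtransport(\rho,\mathcal U)$ vanishes off the diagonal in the $B$ indices, because $C$ does. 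So $P$ is determined by blocks $P^{(b)}\in\RR_{\ge0}^{N_A\times N_A}$ with $P^{(b)}\le p^B_bC^A$. Its marginals must be $p^B_bp^A$ and $p^B_bq^A$, since a row $(j,b)$ or column $(i,b)$ only meets block $b$.

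For each $b$ with $p^B_b>0$, set $R^{(b)}=P^{(b)}/p^B_b$. Then $R^{(b)}\in\qtransport(\rho_A,U_A)$, and blocks with $p^B_b=0$ are forced to be zero. Thus $\qtransport(\rho,\mathcal U)$ is exactly the product of these independent copies of $\qtransport(\rho_A,U_A)$. The objective then splits:
\[
\KL(P\Vert\abs{\mathcal U})=\sum_b \KL(p^B_bR^{(b)}\Vert\abs{U_A})=\sum_b p^B_b\bigl(\KL(R^{(b)}\Vert\abs{U_A})+\ln p^B_b\bigr),
\]
using $\sum R^{(b)}=1$. This is minimized blockwise, so each $R^{(b)}=\PCST(\rho_A,U_A)$ and therefore $P=\PCST(\rho_A,U_A)\ot D_{\rho_B}$. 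The only care needed is the bookkeeping of zero-marginal blocks and the Kronecker index ordering; I expect this to be the main (mild) obstacle.

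For part (b), use the definition \eqref{eq:SCST via PCST}. Right-multiplying by $I_A\ot P_{\rho_B}$ keeps only columns $(i,b)$ with $p^B_b>0$. On such a column, if $p^A_i>0$ the entry is $\PCST(\rho_A,U_A)_{ji}\delta_{bb'}p^B_b/(p^A_ip^B_b)=\SCST(\rho_A,U_A)_{ji}\delta_{bb'}$. If $p^A_i=0$ it is $\abs{\mathcal U_{(j,b'),(i,b)}}^2=\abs{(U_A)_{ji}}^2\delta_{bb'}$, which again equals $\SCST(\rho_A,U_A)_{ji}\delta_{bb'}$. In both cases this matches $\SCST(\rho_A,U_A)\ot P_{\rho_B}$. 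The second identity is analogous.

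For part (c), take $\rho=\ket\psi\bra\psi$ with $\ket\psi=\ket{\psi_A}\ot\ket{\psi_B}$, and write $D_A=D_{\rho_A}$, $D_B=D_{\rho_B}$. The left side of \eqref{eq:propercomm} is $\SCST(U_A\rho_AU_A^\dagger\ot\rho_B,\mathcal V)\,(\PCST(\rho_A,U_A)\ot D_B)$. Since $D_B=P_{\rho_B}D_B$ and every column of $\PCST(\rho_A,U_A)$ is supported where $q^A_j>0$, we can insert $P_{U\rho_AU^\dagger}\ot P_{\rho_B}$ between the two factors. Part (b), applied to the product state $U_A\rho_AU_A^\dagger\ot\rho_B$, then turns the left side into
\[
(P_{U_A\rho_AU_A^\dagger}\ot\SCST(\rho_B,V_B))(\PCST(\rho_A,U_A)\ot D_B)=\PCST(\rho_A,U_A)\ot\PCST(\rho_B,V_B).
\]
The right side of \eqref{eq:propercomm} gives the same tensor product by the symmetric computation, which proves (c).
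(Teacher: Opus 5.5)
Your proposal is correct and follows essentially the same route as the paper. In (a), the capacity constraint forces the block structure in the $B$ indices, and the KL objective then splits blockwise with the $r_{i_B}\ln r_{i_B}$ shift. In (c), you insert a support projection and apply (b) to the product state $U_A\rho_A U_A^\dagger\otimes\rho_B$, exactly as the paper does. The remaining differences are cosmetic: you spell out the case distinction $p^A_i>0$ versus $p^A_i=0$ in (b), which the paper leaves implicit, and the extra factor $P_{\rho_B}$ you insert in (c) is harmless but unnecessary.
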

\begin{proof}
(a)~We only prove the first identity; the proof of the second identity is completely analogous.
For every~$P \in \qtransport(\rho_A \ot \rho_B, U_A \ot I_B)$, by the capacity constraint we have~$P_{j_A j_B, i_A i_B} = 0$ whenever~$j_B \neq i_B$, as the corresponding entry of~$\mathcal U$ is zero.
Therefore, we can regard~$P$ as a tuple of $A \times A$-matrices~$P^{(i_B)}$, satisfying the constraints
\begin{align*}
    \sum_{j_A} P_{j_A,i_A}^{(i_B)} = p_{i_A} r_{i_B}, \quad
    \sum_{i_A} P_{j_A,i_A}^{(i_B)} = q_{j_A} r_{i_B}, \quad
    0 \leq P_{j_A,i_A}^{(i_B)} \leq 2 \abs{(U_A)_{j_A,i_A}} \sqrt{q_{j_A} p_{i_A}} \, r_{i_B},
\end{align*}
where $p_{i_A}\coloneqq(\rho_A)_{i_A,i_A}$, $q_{j_A}\coloneqq(U_A\rho_AU_A^\dagger)_{j_A,j_A}$, and $r_{i_B}\coloneqq(\rho_B)_{i_B,i_B}$;
in other words, $P^{(i_B)} \in r_{i_B} \qtransport(\rho_A, U_A)$ for all~$i_B$.
The Kullback--Leibler divergence~$\KL(P \,\Vert\, \abs{\mathcal U})$ also decomposes as
\begin{align*}
    \KL(P \,\Vert\, \abs{\mathcal U})
= \sum_{i_B} \KL(P^{(i_B)} \,\Vert\, \abs{U_A}).
\end{align*}
Thus, minimizing the left-hand side over~$P \in \qtransport(\rho, \mathcal U)$ is equivalent to separately minimizing each of the right-hand side terms over~$P^{(i_B)} \in r_{i_B} \qtransport(\rho_A, U_A)$.
By definition, the unique minimizer of the left-hand side is $\PCST(\rho_A \ot \rho_B, U_A \ot I_B)$.
For the right-hand side terms there are two cases: if $r_{i_B}=0$, then we must have~$P^{(i_B)}=0$, while if~$r_{i_B}>0$ we have $\KL(P^{(i_B)} \,\Vert\, \abs{U_A}) = r_{i_B} \KL(P^{(i_B)}/r_{i_B} \Vert\, \abs{U_A}) + r_{i_B} \ln r_{i_B}$; in either case, the unique minimizer is given by~$r_{i_B} \PCST(\rho_A, U_A)$.
Therefore, $\PCST(\rho_A \ot \rho_B, U_A \ot I_B)^{(i_B)} = r_{i_B} \PCST(\rho_A, U_A)$, i.e., $\PCST(\rho_A \ot \rho_B, U_A \ot I_B) = \PCST(\rho_A, U_A) \ot D_{\rho_B}$.

(b)~This follows from part~(a) and \cref{eq:SCST via PCST}.

(c)~We first note that
\begin{align}\label{eq:p does not matter}
    P_{U_A \rho_A U_A^\dagger} \PCST(\rho_A,U_A) = \PCST(\rho_A,U_A).
\end{align}
Indeed, if $(U_A \rho_A U_A^\dagger)_{j_A,j_A}=0$ then we must have $\PCST(\rho_A,U_A)_{j_A,i_A} = 0$ for every~$i_A$, because the former is the~$j_A$-th row sum of the latter.
We verify \cref{eq:propercomm}:
\begin{align*}
    \SCST(\mathcal U\rho\mathcal U^\dagger,\mathcal V) \PCST(\rho,\mathcal U)
&=  \SCST(\mathcal U\rho\mathcal U^\dagger,\mathcal V) \bigl( \PCST(\rho_A,U_A) \ot D_{\rho_B} \bigr) \\
&=  \SCST(\mathcal U\rho\mathcal U^\dagger,\mathcal V) \bigl( P_{U_A \rho_A U_A^\dagger} \ot I_B  \bigr) \bigl( \PCST(\rho_A,U_A) \ot D_{\rho_B} \bigr) \\
&=   \bigl( P_{U_A \rho_A U_A^\dagger} \ot \SCST(\rho_B,V_B) \bigr) \bigl( \PCST(\rho_A,U_A) \ot D_{\rho_B} \bigr) \\
&=   P_{U_A \rho_A U_A^\dagger} \PCST(\rho_A,U_A) \ot \SCST(\rho_B,V_B) D_{\rho_B} \\
&=   \PCST(\rho_A,U_A) \ot \PCST(\rho_B,V_B)
\end{align*}
where the first equality follows from part~(a);
next we use \cref{eq:p does not matter} to add the projection;
the third equality follows from part~(b);
the penultimate equality is simply by rearranging;
and for the last equality we use $\PCST(\rho_B,V_B) = \SCST(\rho_B,V_B) D_{\rho_B}$ and again \cref{eq:p does not matter}.
An analogous calculation gives $\SCST(\mathcal V\rho\mathcal V^\dagger,\mathcal U) \PCST(\rho,\mathcal V) = \PCST(\rho_A,U_A) \ot \PCST(\rho_B,V_B)$.
\end{proof}

In the remainder of the section we establish robustness of the capped Schr\"odinger theory.
%
%
%
We first state a result that will later allow us to ``round'' elements of~$\qtransport(\rho, U)$ into~$\qtransport(\tilde \rho, \tilde U)$:

\begin{lemma}\label{thm:transportation correction}
Let $A \in \RR_{\geq0}^{N \times N}$ and let $r,s\in\RR_{\geq0}^N$ such that the transportation polytope~$\transport(r,s|A)$ defined in~\eqref{eq:transpoly} is nonempty.
Let $X \in \RR_{\geq0}^{N \times N}$ be such that $X_{ji} \leq A_{ji}$ for all~$i,j\in[N]$.
Then there exists~$Y \in \transport(r,s|A)$ such that
\begin{align*}
    \sum_{i,j} \abs{Y_{ji} - X_{ji}} \leq N \left( \sum_i \abs*{ \sum_j X_{ji} - r_i } + \sum_j \abs*{ \sum_i X_{ji} - s_j } \right).
\end{align*}
\end{lemma}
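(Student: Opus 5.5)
The plan is a two-stage rounding argument in the flow language of the network behind \eqref{eq:transpoly}: first trim~$X$ so that it becomes a feasible but possibly non-saturating flow, then saturate it using augmenting paths in the residual network. Write
\[
 E_c \coloneqq \sum_i \abs*{\sum_j X_{ji}-r_i}, \qquad E_r \coloneqq \sum_j \abs*{\sum_i X_{ji}-s_j}.
\]
Since $\transport(r,s|A)\neq\emptyset$, we have $\sum_i r_i=\sum_j s_j \eqqcolon V$, and the maximum $S$--$T$ flow in the capacitated network has value exactly~$V$.

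\emph{Step 1 (trimming).} For each column~$i$ whose sum exceeds~$r_i$, decrease the entries of that column proportionally until the column sum equals~$r_i$. Then do the same for each row~$j$ whose sum exceeds~$s_j$. Call the result $X'$. It satisfies $0\le X'\le X\le A$ entrywise, all column sums are $\le r_i$, and all row sums are $\le s_j$. The $\ell^1$ cost $\norm{X-X'}_1$ is at most the total column excess plus the total row excess. The row step only decreases column sums, so it cannot create new column excess. Hence $\norm{X-X'}_1\le E_c+E_r$.

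\emph{Step 2 (augmenting).} Now $X'$ is a feasible flow in the network, with value $V-D$, where $D\coloneqq V-\sum_{j,i}X'_{ji}$. I will bound $D$ by the column deficit: each column deficit of~$X$ is at most its original deficit plus what Steps~1 removed from that column. Summing gives $D\le E_c+E_r$, up to a constant that I will tighten using $\sum_i r_i=\sum_j s_j$.

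Pick any $Z\in\transport(r,s|A)$ and consider the circulation-like difference $Z-X'$ in the residual network of~$X'$. By the standard flow decomposition, $Z-X'$ splits into $S$--$T$ augmenting paths of total value~$D$ plus cycles. Apply only the path parts to~$X'$ and call the result~$Y$. This stays within the capacities, because every intermediate flow lies between~$X'$ and~$Z$ along the decomposition, and it saturates all source and sink edges, so $Y\in\transport(r,s|A)$. Each path can be taken simple in the bipartite middle layer, alternating forward edges $i\to j$ and backward (cancelling) edges. A simple path visits at most~$N$ input and~$N$ output vertices, so it uses at most $2N-1$ middle edges. Hence $\norm{Y-X'}_1\le(2N-1)D$.

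Combining the two steps gives $\norm{Y-X}_1\le (E_c+E_r)+(2N-1)D$. The main obstacle is the bookkeeping that sharpens this to the stated constant~$N$ rather than roughly $2N$. My first attempt will be a sharper accounting of~$D$. Total mass conservation gives $D=\tfrac12(\text{column deficit}+\text{row deficit of }X')$, so $D\le\tfrac12(E_c+E_r)$, up to the trimming contribution. With this, $(2N-1)D+(E_c+E_r)\le N(E_c+E_r)$. If the trimming contribution to~$D$ spoils this, I would instead fold the trimming into the flow decomposition, treating the removed mass as backward augmentations along the same paths, so that it is not counted twice. For the application to robustness, any polynomial factor in~$N$ would suffice, so a slightly weaker constant would still do.
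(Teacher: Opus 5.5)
\textbf{Gap: the argument yields the constant $N+\frac12$, not $N$.} Your two stages are individually sound.
\begin{itemize}
\item Trimming gives $0\le X'\le X\le A$ with $\norm{X-X'}_1\le E_c^++E_r^+$. Here $E_c^\pm$ and $E_r^\pm$ are the total column and row excess/deficit of $X$.
\item Decomposing $Z-X'$ in the residual network does give $Y\in\transport(r,s|A)$ with $\norm{Y-X'}_1\le(2N-1)D$.
\item The estimate $D\le\frac12(E_c+E_r)$ holds even with the trimming included. Indeed $D=(V-\sum_{j,i}X_{ji})+\norm{X-X'}_1$ and $V-\sum_{j,i}X_{ji}=\frac12(E_c^-+E_r^--E_c^+-E_r^+)$.
\end{itemize}
The problem is the last line. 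We have $(2N-1)\cdot\frac12(E_c+E_r)+(E_c+E_r)=(N+\frac12)(E_c+E_r)$, not $N(E_c+E_r)$. So as written you prove the lemma only with $N+\frac12$ in place of $N$. That still suffices for robustness, but not for the statement as posed, and the paper's later numerical constants use $N$.

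Sharper bookkeeping for $D$ cannot repair this. The trimmed mass is paid for twice: once in $\norm{X-X'}_1$ and again, with weight up to $2N-1$, through $D$. Both estimates can be tight simultaneously. Take $N=2$, $r=(1,1)$, $s=(\frac32,\frac12)$, $X_{11}=\frac54$, $X_{12}=\frac14$, $X_{21}=0$, $X_{22}=\frac34$. Then $E_c+E_r=\frac12$, $\norm{X-X'}_1=\frac12$ and $D=\frac14$. Hence $\norm{X-X'}_1+(2N-1)D=\frac54>N(E_c+E_r)=1$. Your estimate only knows that augmenting paths have at most $2N-1$ middle edges, so it cannot certify the constant $N$.

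\textbf{Fix: decompose $Z-X$ directly, without trimming.} This is what your fallback gestures at.
\begin{itemize}
\item Set $F\coloneqq Z-X$ and orient the bipartite graph on the $2N$ vertices by its sign. Edge $i\to j$ carries $F_{ji}$ if $F_{ji}>0$; edge $j\to i$ carries $-F_{ji}$ if $F_{ji}<0$.
\item The net outflow is $r_i-\sum_jX_{ji}$ at column vertex $i$ and $\sum_iX_{ji}-s_j$ at row vertex $j$. These sum to zero and have absolute sum $E_c+E_r$.
\item Hence $F$ splits into simple paths from surplus to deficit vertices, of total value $\frac12(E_c+E_r)$, plus cycles.
\item Let $Y$ be $X$ plus only the path part. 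Then $Y$ lies entrywise between $X$ and $Z$, so $0\le Y\le A$. Its marginals are $(r,s)$ because cycles carry no net flow.
\item A simple path on $2N$ vertices has at most $2N-1$ edges, so $\norm{Y-X}_1\le\frac{2N-1}{2}(E_c+E_r)\le N(E_c+E_r)$.
\end{itemize}
This recovers the paper's constant $\frac{2N-1}{2}$.

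The paper gets the same bound differently. It takes $Y$ to be the $\ell^2$-closest point of $\transport(r,s|A)$ to $X$. It shows by a cycle-cancelling perturbation that the sign graph of $Y-X$ is acyclic. It then bounds $\norm{Y-X}_1$ with a potential argument along a topological ordering. In the decomposition route, simply discarding the cycles takes the place of both the projection and the acyclicity argument.
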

\begin{proof}
Let $Y \in \transport(r,s|A)$ be the element of minimal $\ell^2$-distance%
\footnote{Alternatively one can use a minimizer for the $\ell^1$-distance.}
to~$X$, and set~$Z \coloneqq Y - X$.
We can use~$Z$ to define a directed bipartite graph~$G$ with both vertex sets indexed by~$[N]$: 
there is an edge between left vertex~$j$ and right vertex~$i$ if and only if~$Z_{ji} \neq 0$; it is oriented from left to right if~$Z_{ji} < 0$ and otherwise from right to left.
We claim that~$G$ is acyclic.
Suppose for contradiction that there is a directed simple cycle~$i_1 \to j_1 \to i_2 \to j_2 \to \ldots \to i_\ell \to j_\ell \to i_{\ell+1} = i_1$.
Define~$\Delta \coloneqq \sum_{m=1}^\ell E_{j_m,i_m} - E_{j_m,i_{m+1}}$, where~$E_{j,i} \in \RR^{N \times N}$ denotes the elementary matrix with a single entry in row~$j$ and column~$i$.
Choose $0 < \eps < \min_{\text{cycle}} \abs{Z_{ji}}$.
For entries on the cycle, if $Z_{ji}>0$, then $0 < (\eps\Delta)_{ji} < Z_{ji}$, while if $Z_{ji}<0$ we have $Z_{ji} < (\eps\Delta)_{ji} < 0$.
Thus, each altered entry of $Y - \eps \Delta$ lies between the corresponding entries of~$X$ and~$Y$, so it remains in~$[0,A_{ji}]$, and it moves strictly closer to~$X_{ji}$.
Since the row and column sums are also unchanged, this contradicts the minimality of~$Y$.

Because~$G$ is acyclic, we may choose a topological ordering~$v_1 < \ldots < v_{2N}$ of its vertices, i.e., there is an edge~$v_a \to v_b$ only if $a < b$.
We define a flow~$f$ on~$G$ by assigning $\abs{Z_{ji}}$ to the unique edge between left vertex $j$ and right vertex $i$ (if $Z_{ji}\neq0$).
Then:
\begin{align*}
    \sum_{j,i} \abs{Y_{ji} - X_{ji}}
= \sum_{j,i} \abs{Z_{ji}}
= \sum_e f(e)
\leq \sum_{v_a \to v_b} (b-a) \, f(v_a \to v_b)
= \sum_{b=1}^{2N} b \, \delta(v_b)
= \sum_{b=1}^{2N} \left( b - \frac{2N+1}2 \right) \, \delta(v_b),
\end{align*}
where $\delta(v) \coloneq \sum_{u : u \to v} f(u \to v) - \sum_{w : v \to w} f(v \to w)$ denotes the inflow minus outflow at vertex~$v$;
the last equality holds because $\sum_v \delta(v) = 0$.
Therefore:
\begin{align*}
    \sum_{j,i} \abs{Y_{ji} - X_{ji}}
\leq \frac{2N-1}2 \sum_v \abs{\delta(v)}
= \frac{2N-1}2 \left( \sum_i \abs*{ \sum_j Z_{ji} } + \sum_j \abs*{ \sum_i Z_{ji} } \right),
\end{align*}
and now the claim follows from $Z = Y - X$ and $Y \in \transport(r,s|A)$.
\end{proof}

We will also require some elementary continuity estimates.

\begin{lemma}\label{lem:elementary estimates}
\begin{enumerate}
\item[(a)] For $h(x) \coloneqq x \ln(1/x)$ on $[0,1]$ (with $h(0)=0$), one has $\abs{h(x)-h(y)} \leq \frac{2}{e} \sqrt{\abs{x-y}}$.
\item[(b)] The function $\eta_c(x) \coloneqq x \ln(x/c)$ on $[0,c]$ (with $\eta_0(0)=0$) satisfies
  \[
     \abs{\eta_c(x) - \eta_c(y)} \leq \frac{2}{e} \sqrt{c \abs{x-y}}
     \qquad (0 \leq x,y \leq c).
  \]
\item[(c)] For $0 \leq x \leq c$ and $0 \leq y \leq c'$, we have
  \[
     \abs{\eta_c(x) - \eta_{c'}(y)} \leq \frac2e \sqrt{c \abs{x-y}} + \abs{c-c'}.
  \]
\item[(d)] Thus, for any two states $\rho,\tilde\rho$, unitaries~$U,\tilde U$, and $Q \in \qtransport(\rho,U)$ and $\tilde Q \in \qtransport(\tilde\rho,\tilde U)$, we have
  \begin{align*}
        \abs*{ \KL(Q \Vert C) - \KL(\tilde Q \Vert \tilde C) }
    \leq \frac{2\sqrt2}e N^{1/4} \sqrt{\norm{Q - \tilde Q}_1} + \norm{C - \tilde C}_1,
  \end{align*}
  where $C,\tilde C$ denote the capacity matrices associated with $(\rho,U)$ and $(\tilde\rho,\tilde U)$, respectively, as in~\cref{eq:p q C}.
\end{enumerate}
\end{lemma}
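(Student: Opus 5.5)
We prove the four parts in order; each reduces to the previous one plus elementary calculus.

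For (a), we may assume $0\le x<y\le1$ and set $d\coloneqq y-x$. Since $h$ is concave with $h(0)=0$, it is subadditive on $[0,1]$, so $h(y)\le h(x)+h(d)$. For the reverse direction we use that $h'(t)=\ln(1/t)-1$ is decreasing, so over an interval of length~$d$ the function $h$ can decrease by at most the decrease on the final interval $[1-d,1]$; that is, $h(x)-h(y)\le h(1-d)\le h(d)$, where the last step is a routine check. Hence $\abs{h(x)-h(y)}\le h(d)$. It remains to bound $h(d)=d\ln(1/d)=\sqrt d\cdot\sqrt d\ln(1/d)$. The function $t\mapsto\sqrt t\ln(1/t)$ attains its maximum $2/e$ at $t=e^{-2}$, so $h(d)\le\frac2e\sqrt d$.

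For (b), we write $\eta_c(x)=-c\,h(x/c)$ for $c>0$ (the case $c=0$ is trivial). Applying (a) to $x/c$ and $y/c$ gives
\[
\abs{\eta_c(x)-\eta_c(y)}\le c\cdot\frac2e\sqrt{\abs{x-y}/c}=\frac2e\sqrt{c\abs{x-y}}.
\]

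For (c), we split $\eta_c(x)-\eta_{c'}(y)=[\eta_c(x)-\eta_c(y)]+[\eta_c(y)-\eta_{c'}(y)]$. The difficulty is that $y$ need not lie in $[0,c]$, so (b) does not apply directly. To handle this, we suppose without loss of generality that $c\ge c'$ (otherwise swap the roles of the two pairs); then $y\le c'\le c$, and (b) applies to the first bracket. The second bracket equals $y\ln(c'/c)$, and $y\abs{\ln(c/c')}\le c'\ln(c/c')\le c-c'$ by $\ln u\le u-1$ with $u=c/c'$. If instead we swapped roles, the first term would carry $c'$ in place of $c$. Since the statement is symmetric up to which capacity appears under the root, we may arrange that $c=\max(c,c')$, and the bound with $\max$ implies the stated form after relabeling.

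For (d), we sum (c) over entries using $\KL(Q\Vert C)=\sum_{j,i}\eta_{C_{ji}}(Q_{ji})$, which is valid since $Q_{ji}\le C_{ji}$. Cauchy--Schwarz gives
\[
\sum_{j,i}\sqrt{c_{ji}\abs{Q_{ji}-\tilde Q_{ji}}}\le\sqrt{\textstyle\sum c_{ji}}\,\sqrt{\norm{Q-\tilde Q}_1}.
\]
Taking $c_{ji}=\max(C_{ji},\tilde C_{ji})\le C_{ji}+\tilde C_{ji}$, it remains to bound $\sum_{j,i}C_{ji}=2\sum_{j,i}\abs{U_{ji}}\sqrt{q_jp_i}$. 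Viewing this as $2\,u^\top\abs U v$ with unit vectors $u=\sqrt q$ and $v=\sqrt p$, we get $\sum_{j,i}C_{ji}\le2\norm{\abs U}_{\ope}$. Since every row and column of $\abs U$ has $\ell^2$ norm~$1$, we have $\norm{\abs U}_{\ope}\le\sqrt{\norm{\abs U}_1\norm{\abs U}_\infty}\le\sqrt N$. This gives $\sum c\le4\sqrt N$, hence a constant $\frac4e N^{1/4}$ rather than the stated $\frac{2\sqrt2}eN^{1/4}$.

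The main obstacle is this constant. To recover $\frac{2\sqrt2}{e}$, we would apply (c) oriented per entry so that the root carries only one capacity, bounding $\sum_{j,i}\min(C_{ji},\tilde C_{ji})$ by the smaller of $2\sqrt N$, or bounding by $\sum C_{ji}\le2\sqrt N$ directly. The delicate point is to ensure that the orientation choice in (c) can always use the single matrix $C$. Failing that, we would accept a slightly worse constant, which suffices for robustness.
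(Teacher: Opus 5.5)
\textbf{The gap is in part (c), and it is what costs you the stated constant in (d).} The inequality in (c) is not symmetric in the two pairs: the capacity under the square root is $c$, the one attached to $x$. So you cannot assume $c\ge c'$. Swapping the pairs only gives $\frac2e\sqrt{\max(c,c')\,|x-y|}+|c-c'|$, and that does not imply the stated bound when $c<c'$; relabeling cannot fix this.

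The case you leave open is $c<y\le c'$, where (b) cannot be applied to $\eta_c(x)-\eta_c(y)$. The missing idea is to clamp $y$: set $z\coloneqq\min(y,c)$.
\begin{itemize}
\item Since $x\le c$, we have $|x-z|\le|x-y|$, so (b) gives $|\eta_c(x)-\eta_c(z)|\le\frac2e\sqrt{c|x-y|}$.
\item If $y\le c$, then $z=y$, and your estimate $y|\ln(c'/c)|\le\min(c,c')\,|\ln(c'/c)|\le|c-c'|$ works in either order of $c,c'$.
\item If $y>c$, then $\eta_c(z)=\eta_c(c)=0$ and $|\eta_{c'}(y)|=y\ln(c'/y)\le c'-y\le c'-c$.
\end{itemize}
With (c) in this form, the root in (d) only ever carries $C_{ji}$, never $\max(C_{ji},\tilde C_{ji})$, so there is no orientation issue to resolve. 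Your bound $\sum_{j,i}C_{ji}\le 2\sqrt N$ then gives $\frac2e\sqrt{2\sqrt N}=\frac{2\sqrt2}{e}N^{1/4}$ directly. Your operator-norm (Schur test) derivation of that bound is fine; the paper instead applies Cauchy--Schwarz entrywise. As written, your proposal proves (c) only when $c\ge c'$ and (d) only with constant $\frac4e$, so it does not establish the statement.

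\textbf{There is also a false step in (a).} The ``routine check'' $h(1-d)\le h(d)$ holds only for $d\le\frac12$. For example, with $x=0.1$ and $y=1$ you get $h(x)-h(y)=0.1\ln 10\approx 0.23$, while $h(d)=h(0.9)\approx 0.095$. So your intermediate claim $|h(x)-h(y)|\le h(d)$ is wrong.

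The final bound survives if you estimate $h(1-d)$ directly. From $\ln\frac1{1-d}\le\frac d{1-d}$ you get $h(1-d)\le d$, and also $h\le\frac1e$. Hence $h(1-d)\le\min(d,\frac1e)\le\sqrt{d/e}\le\frac2e\sqrt d$. This is essentially the paper's lower-direction argument via $h'\ge-1$ and $0\le h\le\frac1e$. Your upper direction via subadditivity is correct and matches the paper's concavity argument.
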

\begin{proof}
(a)~We may assume without loss of generality that $x > y$.
    By concavity and using $h(0)=0$, we have $h(y) \geq \frac y x h(x)$ and $h(x-y) \geq \frac{x-y} x h(x)$, hence
    \begin{align*}
        h(x) - h(y)
    \leq \frac{x - y} x h(x)
    \leq h(x-y)
    \leq \sqrt{x-y} \max_{t \in [0,1]} \sqrt t \ln(1/t)
    = \frac2e \sqrt{x-y},
    \end{align*}
    as the right-hand side is maximized at $t=e^{-2}$.
    On the other hand, $h' \geq -1$ and $0 \leq h \leq h(e^{-1}) = e^{-1}$ on~$[0,1]$, hence
    \begin{align*}
        h(x) - h(y) \geq -\min \{ x-y, \frac1e \} \geq -\sqrt{\frac{x-y}e} \geq -\frac2e\sqrt{x-y}.
    \end{align*}

(b)~This follows at once from part~(a), because $\eta_c(x) = -c \, h(x/c)$ for $c>0$ and $x/c \in [0,1]$ (the case $c=0$ is immediate).

(c)~Let $z \coloneqq \min(y,c)$.
    Then,
    \begin{align*}
        \abs{\eta_c(x) - \eta_c(z)} \leq \frac2e \sqrt{c \abs{x-z}} \leq \frac2e \sqrt{c \abs{x-y}}
    \end{align*}
    by part~(b), and it suffices to prove that
    \begin{align*}
        \abs{\eta_c(z) - \eta_{c'}(y)} \leq \abs{c - c'}.
    \end{align*}
    Indeed, if $y>c$ then $z=c$ and $\eta_c(z)=0$, hence
    \begin{align*}
        \abs{\eta_c(z) - \eta_{c'}(y)}
    = \abs{\eta_{c'}(y)}
    = y \ln(c'/y)
    \leq c' - y \leq c' - c
    \end{align*}
    using the concavity of the logarithm, while if $y \leq c$ then $z = y$ and, similarly,
    \begin{align*}
        \abs{\eta_c(z) - \eta_{c'}(y)}
    =   \abs{\eta_c(y) - \eta_{c'}(y)}
    =   y \abs{\ln(c'/c)}
    \leq \abs{c'-c}
    \end{align*}
    (here we assume $y>0$ so that $c,c'>0$, otherwise both functions vanish).
    This concludes the proof of part~(c).

(d)~Because $Q_{ji} \leq C_{ji}$ and $\tilde Q_{ji} \leq \tilde C_{ji}$, we have
    \begin{align*}
        \abs*{ \KL(Q\Vert C) - \KL(\tilde Q\Vert \tilde C) }
    \leq \frac2e \sum_{j,i} \sqrt{ C_{ji} \abs{Q_{ji} - \tilde Q_{ji}}} + \norm{C - \tilde C}_1
    \leq \frac2e \sqrt{\sum_{j,i} C_{ji}} \sqrt{\norm{Q - \tilde Q}_1} + \norm{C - \tilde C}_1
    \end{align*}
    using part~(c) and the Cauchy--Schwarz inequality.
    Because $U$ is a unitary and $p$, $q$ are probability distributions, again using the Cauchy--Schwarz inequality,
    \begin{align*}
        \sum_{j,i} C_{ji}
    = 2 \sum_{j,i} \abs{U_{ji}} \sqrt{q_j p_i}
    \leq 2 \sqrt{\sum_{j,i} \abs{U_{ji}}^2} \sqrt{\sum_{j,i} q_j p_i}
    \leq 2 \sqrt N,
    \end{align*}
    and now the bound follows.
\end{proof}

We now establish robustness of the capped Schr\"odinger theory:

\begin{theorem}[Robustness]\label{thm:cst quantitative robustness}
Let $\delta\geq0$.
Suppose~$(\rho, U)$,~$(\tilde\rho, \tilde U)$ satisfy~$\mnorm{\rho - \tilde\rho}, \mnorm{U - \tilde U} \leq \delta$.
Then:
\[
    \mnorm{\PCST(\rho,U) - \PCST(\tilde\rho, \tilde U)}
\leq \frac12\norm{\PCST(\rho,U) - \PCST(\tilde\rho, \tilde U)}_1
\leq 5 N \delta^{1/8}.
\]
In particular, the capped Schr\"odinger theory is robust.
\end{theorem}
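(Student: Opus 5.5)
The plan follows the outline in the introduction: round each minimizer into the other polytope, compare objective values, and convert near-optimality into closeness via strong convexity. Write $P \coloneqq \PCST(\rho,U)$ and $\tilde P \coloneqq \PCST(\tilde\rho,\tilde U)$, and use the objective $\KL(\cdot\Vert C)$ instead of $\KL(\cdot\Vert\abs U)$. By \cref{eq:KL absU to KL C} the two differ by a constant depending only on $(\rho,U)$, so $P$ is also the minimizer of $\KL(\cdot\Vert C)$ over $\qtransport(\rho,U)$. The first inequality in the statement is routine: $P$ and $\tilde P$ are both probability distributions, so their difference sums to zero, and any single entry is at most half the $\ell^1$ norm.

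\textbf{Step 1: perturbation bounds for the data.}
I would bound $\norm{p-\tilde p}_1 \le N\delta$ and $\norm{q-\tilde q}_1 \le \mathrm{poly}(N)\,\delta$. For the latter, use $U\rho U^\dagger - \tilde U\tilde\rho\tilde U^\dagger = (U-\tilde U)\rho U^\dagger + \tilde U(\rho-\tilde\rho)U^\dagger + \tilde U\tilde\rho(U-\tilde U)^\dagger$ and the fact that the entries of unitaries and states are bounded by~$1$.

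For the capacities, the square root makes the dependence non-Lipschitz, so I expect only $\norm{C-\tilde C}_1 \le \mathrm{poly}(N)\sqrt\delta$. This comes from $\abs{\sqrt a-\sqrt b}\le\sqrt{\abs{a-b}}$ together with $\abs{\abs{U_{ji}}-\abs{\tilde U_{ji}}}\le\delta$.

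\textbf{Step 2: rounding.}
Set $X \coloneqq \min(P,\tilde C)$ entrywise. Then $X\le\tilde C$, and $\norm{X-P}_1 \le \norm{(C-\tilde C)_+}_1$ because $P \le C$. The marginals of $X$ are therefore within $\norm{C-\tilde C}_1$ of $(p,q)$, hence within $\norm{C-\tilde C}_1+\norm{p-\tilde p}_1+\norm{q-\tilde q}_1$ of $(\tilde p,\tilde q)$. The polytope $\qtransport(\tilde\rho,\tilde U)$ is nonempty by \cref{lem:C-rho-u feasible}, so \cref{thm:transportation correction} yields $\hat P\in\qtransport(\tilde\rho,\tilde U)$ with
\[
\norm{\hat P - P}_1 \le \varepsilon \coloneqq \mathrm{poly}(N)\sqrt\delta .
\]
Symmetrically, there is $\check P\in\qtransport(\rho,U)$ with $\norm{\check P-\tilde P}_1\le\varepsilon$.

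\textbf{Step 3: comparing objectives.}
Let $\kappa(\varepsilon) \coloneqq \frac{2\sqrt2}{e}N^{1/4}\sqrt\varepsilon + \norm{C-\tilde C}_1$. By \cref{lem:elementary estimates}(d),
\[
\KL(\hat P\Vert\tilde C) \le \KL(P\Vert C) + \kappa(\varepsilon), \qquad \KL(\check P\Vert C) \le \KL(\tilde P\Vert\tilde C) + \kappa(\varepsilon).
\]
Optimality of $P$ gives $\KL(P\Vert C)\le\KL(\check P\Vert C)$. Chaining these, $\check P$ is $2\kappa$-optimal in $\qtransport(\rho,U)$, and likewise $\hat P$ is $2\kappa$-optimal in $\qtransport(\tilde\rho,\tilde U)$.

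\textbf{Step 4: strong convexity.}
Both $P$ and $\check P$ lie in the convex set $\qtransport(\rho,U)\subseteq\Delta_C$, and their midpoint is feasible. Applying \cref{eq:strong convex} with $t=\tfrac12$ and using that $P$ is the minimizer,
\[
\tfrac18\norm{P-\check P}_1^2 \le \tfrac12\bigl(\KL(\check P\Vert C)-\KL(P\Vert C)\bigr) \le \kappa .
\]
Hence $\norm{P-\check P}_1\le\sqrt{8\kappa}$, and the triangle inequality gives
\[
\norm{P-\tilde P}_1 \le \sqrt{8\kappa}+\varepsilon .
\]
Since $\varepsilon\sim\sqrt\delta$ and $\kappa\sim\sqrt\varepsilon\sim\delta^{1/4}$, this is of order $\delta^{1/8}$, which matches the exponent in the statement. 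Robustness then follows by choosing $\delta = (10Nr(N))^{-8}$.

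\textbf{Main obstacle.}
The hardest part will be the polynomial bookkeeping needed to get the clean constant $5N$. In particular, \cref{thm:transportation correction} carries a factor $N$, and $\norm{C-\tilde C}_1$ carries further powers of $N$, yet after the fourth root the result must be only linear in $N$. If a direct entrywise bound is too lossy, I would first try to bound $\sum_{j,i}\abs{C_{ji}-\tilde C_{ji}}$ more carefully via Cauchy--Schwarz, along the lines of \cref{lem:elementary estimates}(d). I would also handle the regime $\delta\ge N^{-8}$ trivially, since there $5N\delta^{1/8}\ge 5 > 1$.
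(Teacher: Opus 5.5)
Your proposal is correct and matches the paper's proof essentially step for step: you round $\min(P,\tilde C)$ into the other polytope via \cref{thm:transportation correction} in both directions, compare objectives with \cref{lem:elementary estimates}(d), and finish with strong convexity. The only difference is that you take $t=\tfrac12$ instead of $t\downarrow0$ in \cref{eq:strong convex}, which loses a factor $\sqrt2$ that the $\tfrac12$ in front of the $\ell^1$ norm absorbs. For the bookkeeping you flagged, the paper bounds $q$ via operator norms, $\abs{q_j-\tilde q_j}\le 2\norm{U-\tilde U}_{\ope}+\norm{\rho-\tilde\rho}_{\ope}\le 3N\delta$; bounding entries by $1$, as you describe, only gives $3N^2\delta$, which propagates to $N^{9/8}\delta^{1/8}$ unless you use your Cauchy--Schwarz fallback. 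With the operator-norm bound one gets $\norm{C-\tilde C}_1\le 6N^{5/2}\sqrt\delta$ and your $\varepsilon\le 20N^{7/2}\sqrt\delta$, and the final bound is linear in $N$ once you assume $\delta\le(5N)^{-8}$.
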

\begin{proof}
Since both $\PCST(\rho,U), \PCST(\tilde\rho, \tilde U)$ are probability distributions by definition, the bound is trivial if $5 N \delta^{1/8} \geq 1$.
So without loss of generality we may assume that $\delta\leq1/(5 N)^8\leq1$.
First, we establish $\ell^1$-distance bounds between the Born marginals and the capacity matrices associated with~$(\rho,U)$ and $(\tilde\rho,\tilde U)$, respectively.
Let $p$ and $\tilde p$ denote the Born marginals of $\rho$ and $\tilde \rho$.
Then:
\begin{align*}
    \abs{p_i - \tilde p_i} = \abs{ \braket{i | \rho - \tilde\rho | i} } \leq \mnorm{\rho - \tilde\rho} \leq \delta \quad (\forall i).
\end{align*}
Next, we bound the distance between the Born marginals $q$ and $\tilde q$ of $U \rho U^\dagger$ and $\tilde U \tilde\rho \tilde U^\dagger$:
\begin{align*}
    \abs{q_j - \tilde q_j}
= \abs{ \braket{j | U \rho U^{\dagger} - \tilde U \tilde\rho \tilde U^\dagger | j} }
\leq \norm{U \rho U^\dagger - \tilde U \tilde\rho \tilde U^\dagger}_{\ope}
\leq 2 \norm{U - \tilde U}_{\ope} + \norm{\rho - \tilde\rho}_{\ope}
\leq 3 N \delta,
\end{align*}
because $\norm{X}_{\ope} \leq N \mnorm{X}$ for any $N\times N$ matrix~$X$.
Finally, we bound the distance between the capacity matrices~$C, \tilde C$ defined as in \cref{eq:p q C} for the pairs $\rho,U$ and $\tilde\rho,\tilde U$, respectively:
\begin{align*}
    \abs{C_{ji} - \tilde C_{ji}}
&= 2 \abs*{ \abs{U_{ji}} \sqrt{q_j p_i} - \abs{\tilde U_{ji}} \sqrt{\tilde q_j \tilde p_i} } \\
&\leq 2 \abs*{ \abs{U_{ji}} - \abs{\tilde U_{ji}} } \sqrt{q_j p_i} + 2 \abs{\tilde U_{ji}} \abs*{ \sqrt{q_j p_i} - \sqrt{\tilde q_j \tilde p_i} } \\
&\leq 2 \delta + 2 \sqrt{\abs*{ q_j p_i - \tilde q_j \tilde p_i} } \\
&\leq 2 \delta + 2 \sqrt{\abs*{ q_j - \tilde q_j} p_i + \tilde q_j \abs*{ p_i - \tilde p_i} } \\
&\leq 2 \delta + 2 \sqrt{3 N \delta + \delta}
\leq 6 \sqrt{N \delta}.
\end{align*}
Thus:
\begin{align*}
    \norm{p - \tilde p}_1 \leq N\delta \leq \sqrt\delta, \qquad
    \norm{q - \tilde q}_1 \leq 3N^2\delta \leq \sqrt\delta, \qquad
    \norm{C - \tilde C}_1 \leq 6 N^{5/2} \sqrt\delta \eqqcolon \kappa.
\end{align*}

Now set~$P \coloneqq \PCST(\rho,U)$ and define $P'_{ji} \coloneqq \min(P_{ji}, \tilde C_{ji})$.
Since $P_{ji} \leq C_{ji}$, we have $\norm{P' - P}_1 \leq \kappa$.
Denoting the row and column sums of~$P'$ by~$q'$ and $p'$, respectively, it follows that $\norm{p' - p}_1, \norm{q' - q}_1 \leq \kappa$; hence we have $\norm{p' - \tilde p}_1, \norm{q' - \tilde q}_1 \leq \sqrt\delta + \kappa$.
Thus, \cref{thm:transportation correction} shows that there exists~$\tilde Q \in \qtransport(\tilde\rho,\tilde U)$ such that
$\norm{P' - \tilde Q}_1
    \leq N \left( \norm{p' - \tilde p}_1 + \norm{q' - \tilde q}_1 \right)
    \leq 2 N (\sqrt\delta + \kappa)
    \leq 14 N^{7/2} \sqrt\delta$,
and hence
\[ \norm{\tilde Q - P}_1 \leq \norm{\tilde Q - P'}_1 + \norm{P' - P}_1 \leq 14 N^{7/2} \sqrt\delta + \kappa \leq 20 N^{7/2} \sqrt\delta. \]
Likewise, for $\tilde P \coloneqq \PCST(\tilde\rho,\tilde U)$, there exists $Q \in \qtransport(\rho,U)$ such that
\[ \norm{Q - \tilde P}_1 \leq 20 N^{7/2} \sqrt\delta. \]
Recall from \cref{eq:KL absU to KL C} that we can also define $P$ as the minimizer of~$\KL(\cdot\Vert C)$ over $\qtransport(\rho,U)$, and likewise for~$\tilde P$.
We show that $Q$ is an approximate minimizer for this objective:
\begin{align}
\nonumber
&\quad \KL(Q \Vert C) - \KL(P \Vert C) \\
\nonumber
&= [ \KL(Q \Vert C) - \KL(\tilde P \Vert \tilde C) ]
+ \underbrace{[ \KL(\tilde P \Vert \tilde C) - \KL(\tilde Q \Vert \tilde C) ]}_{\leq0}
+ [ \KL(\tilde Q \Vert \tilde C) - \KL(P \Vert C) ] \\
\nonumber
&\leq \frac{2\sqrt2}e N^{1/4} \left( \sqrt{\norm{Q - \tilde P}_1} + \sqrt{\norm{\tilde Q - P}_1} \right) + 2 \norm{C - \tilde C}_1 \\
\label{eq:approx min}
&\leq \frac{8\sqrt{10}}e N^2 \delta^{1/4} + 12 N^{5/2} \sqrt\delta
\leq 10 N^2 \delta^{1/4},
\end{align}
where we used \cref{lem:elementary estimates}~(d) to upper bound the first and the last term; the underbraced middle term is nonpositive because~$\tilde P$ is a minimizer for $\KL(\cdot \Vert \tilde C)$ over $\qtransport(\tilde\rho,\tilde U)$;
the final estimate uses our running assumption that $\delta \leq 1/(5N)^8$.

On the other hand, using the fact that $P$ is a minimizer on $\qtransport(\rho,U)$ and the strong convexity estimate~\eqref{eq:strong convex} for $P_t \coloneqq (1-t) P + t Q$ ($t \in (0,1)$):
\begin{align*}
     \KL(Q \Vert C) - \KL(P \Vert C)
\geq \frac1t \Bigl( (1-t) \KL(P \Vert C) + t \KL(Q \Vert C) - \KL(P_t \Vert C) \Bigr)
\geq \frac{1-t}2 \norm{P - Q}_1^2.
\end{align*}
Letting $t \downarrow 0$, we obtain $\KL(Q \Vert C) - \KL(P \Vert C) \geq \frac12 \norm{P - Q}_1^2$, and hence $\norm{P - Q}_1 \leq \sqrt{20} N \delta^{1/8}$, using \cref{eq:approx min}.
Thus:
\begin{align*}
    \norm{P - \tilde P}_1
\leq \norm{P - Q}_1 + \norm{Q - \tilde P}_1
\leq \sqrt{20} N \delta^{1/8} + 20 N^{7/2} \sqrt\delta
\leq 5 N \delta^{1/8}
\end{align*}
and now the claim follows from the relation $\mnorm{P - \tilde P} \leq \dTV(P,\tilde P) = \frac12\norm{P-\tilde P}_1$.
\end{proof}

\begin{proof}[Proof of \cref{thm:yay}]
As discussed above, symmetry and indifference are clear from the definition.
Distributional product commutativity is \cref{lem:cst tensor identity}~(c).
Robustness is proved in \cref{thm:cst quantitative robustness}.
\end{proof}

\section{Limitations on Axioms of Hidden-Variable Theories}\label{sec:limitations}

\subsection{Indifference and Product Commutativity are Incompatible}\label{sec:productcom}

\begin{proof}[Proof of \cref{thm:productcom}]
We will assume both indifference and product commutativity to obtain a contradiction in dimension~$N=12$.
Let us denote the standard basis of~$\C^{12}$ by $\ket0$, \dots, $\ket{11}$.
We will consider the pure state~$\rho = \proj\psi$, where $\ket\psi=(\ket0 + \ket1 + \ket2 - \ket3)/2$, which is a product state with respect to two distinct tensor product factorizations: $\C^{12} = \C^2 \ot \C^6$ and $\C^{12} = \C^3 \ot \C^4$.
Likewise, the unitary~$\mathcal V = I_6 \ot H$, where $H$ is the Hadamard gate, is a local unitary with respect to either factorization.
By indifference, for any state~$\sigma$, the transition matrix $S(\sigma, \mathcal V)$ is block diagonal with $2\times2$ diagonal blocks, i.e., it preserves the span of $\{\ket{2k},\ket{2k+1}\}$ for~$k=0,\dots,5$.
In particular, $S(\rho, \mathcal V) \ket 6 = p \ket 6 + (1-p) \ket 7$ for some~$p\in[0,1]$.
The key idea will be to use product commutativity twice, once for each tensor product decomposition, to derive contradictory values of~$p$.

We first consider the factorization $\C^{12} = \C^2 \ot \C^6$ and the local unitary $\mathcal U = H \ot I_6$.
By product commutativity, we must have
\begin{align}\label{eq:prod comm 1}
    S(\mathcal U \rho \mathcal U^\dagger, \mathcal V) S(\rho, \mathcal U)
= S(\mathcal V \rho \mathcal V^\dagger, \mathcal U) S(\rho, \mathcal V).
\end{align}
Now, $\mathcal U\ket\psi = (\ket0 + \ket1 + \ket2 - \ket3 + \ket6 + \ket7 + \ket8 - \ket9)/\sqrt8$ and $\mathcal V\mathcal U\ket\psi = (\ket0 + \ket3 + \ket 6 + \ket 9)/2$.
The transition matrix $S(\mathcal U \rho \mathcal U^\dagger, \mathcal V)$ has to map the Born marginals of the former to the Born marginals of the latter.
By indifference, it follows that $S(\mathcal U \rho \mathcal U^\dagger, \mathcal V) \ket0 = \ket0$ and $S(\mathcal U \rho \mathcal U^\dagger, \mathcal V) \ket6 = \ket6$.
Also by indifference, for any state~$\sigma$, $S(\sigma, \mathcal U)$ preserves the span of $\{\ket k, \ket{k+6}\}$ for~$k=0,\dots,5$.
It follows that $S(\mathcal U \rho \mathcal U^\dagger, \mathcal V) S(\rho, \mathcal U) \ket6$
is a probability distribution supported on~$\{0,6\}$.
On the other hand:
\begin{align*}
    S(\mathcal V \rho \mathcal V^\dagger, \mathcal U) S(\rho, \mathcal V) \ket6
= p S(\mathcal V \rho \mathcal V^\dagger, \mathcal U) \ket6
+ (1-p) S(\mathcal V \rho \mathcal V^\dagger, \mathcal U) \ket7.
\end{align*}
Again by indifference, this is a probability distribution with probability mass~$p$ on~$\{0,6\}$ and probability mass~$1-p$ on~$\{1,7\}$.
By \cref{eq:prod comm 1}, we must have $p=1$.

Now we consider the factorization $\C^{12} = \C^3 \ot \C^4$ and the local unitary $\mathcal U = (H \op 1) \ot I_4$.
By product commutativity, we must have
\begin{align}\label{eq:prod comm 2}
    S(\mathcal U \rho \mathcal U^\dagger, \mathcal V) S(\rho, \mathcal U)
= S(\mathcal V \rho \mathcal V^\dagger, \mathcal U) S(\rho, \mathcal V).
\end{align}
This time around, $\mathcal U\ket\psi = (\ket0 + \ket1 + \ket2 - \ket3 + \ket4 + \ket5 + \ket6 - \ket7)/\sqrt8$ and $\mathcal V\mathcal U\ket\psi = (\ket0 + \ket3 + \ket4 + \ket7)/2$.
The transition matrix $S(\mathcal U \rho \mathcal U^\dagger, \mathcal V)$ has to map the Born marginals of the former to the Born marginals of the latter.
By indifference, it follows that $S(\mathcal U \rho \mathcal U^\dagger, \mathcal V) \ket2 = \ket3$ and $S(\mathcal U \rho \mathcal U^\dagger, \mathcal V) \ket6 = \ket7$.
Also by indifference, for any state~$\sigma$, $S(\sigma, \mathcal U)$ preserves the span of~$\{\ket k,\ket{k+4}\}$ for $k=0,\dots,3$ as well as the span of~$\{\ket k\}$ for $k=8,\dots,11$.
Thus, $S(\mathcal U \rho \mathcal U^\dagger, \mathcal V) S(\rho, \mathcal U) \ket6$ is a probability distribution supported on~$\{3,7\}$.
On the other hand:
\begin{align*}
    S(\mathcal V \rho \mathcal V^\dagger, \mathcal U) S(\rho, \mathcal V) \ket6
= p S(\mathcal V \rho \mathcal V^\dagger, \mathcal U) \ket6
+ (1-p) S(\mathcal V \rho \mathcal V^\dagger, \mathcal U) \ket7.
\end{align*}
Again by indifference, this is a probability distribution with probability mass~$p$ on $\{2,6\}$ and probability mass~$1-p$ on~$\{3,7\}$.
By \cref{eq:prod comm 2}, we must have $p=0$.
This is the desired contradiction.
\end{proof}

\subsection{Indifference and Decomposition Invariance are Incompatible}

\begin{proof}[Proof of \cref{thm:decompinvcom}]
Take $U = H \op 1$ on $\C^3$, where $H$ is the Hadamard gate, so
\begin{align*}
    U = \begin{pmatrix}1/\sqrt2 & 1/\sqrt2 & 0 \\ 1/\sqrt2 & -1/\sqrt2 & 0 \\ 0 & 0 & 1\end{pmatrix}.
\end{align*}
Clearly, the minimal blocks are $(\{1,2\},\{1,2\})$ and $(\{3\},\{3\})$, hence indifference implies that the stochastic matrix must look like:
\begin{align*}
    S(\rho, U) = \begin{pmatrix}\star & \star & 0 \\ \star & \star & 0 \\ 0 & 0 & 1\end{pmatrix}
\end{align*}
Let $\ket\pm = (\ket0 \pm \ket1)/\sqrt2$; here and below we enumerate the standard basis of~$\C^3$ as $\ket0,\ket1,\ket2$.
Then, $H\ket+=\ket0$ and $H\ket-=\ket1$, hence compatibility with the Born marginals of the final states forces
\begin{align*}
    S(\proj+, U) = \begin{pmatrix}1 & 1 & 0 \\ 0 & 0 & 0 \\ 0 & 0 & 1\end{pmatrix} \quad\text{and}\quad
    S(\proj-, U) = \begin{pmatrix}0 & 0 & 0 \\ 1 & 1 & 0 \\ 0 & 0 & 1\end{pmatrix},
\end{align*}
(the stochasticity of $S$ forces the nonzero entries to be one).
Likewise, the same argument shows that for the initial states $\rho_\pm \coloneqq \frac12 \proj{\pm} + \frac12 \proj2$,
\begin{align*}
    S(\rho_+, U) = \begin{pmatrix}1 & 1 & 0 \\ 0 & 0 & 0 \\ 0 & 0 & 1\end{pmatrix} \quad\text{and}\quad
    S(\rho_-, U) = \begin{pmatrix}0 & 0 & 0 \\ 1 & 1 & 0 \\ 0 & 0 & 1\end{pmatrix}.
\end{align*}
On the other hand, decomposition invariance yields
\begin{align*}
    S(\rho_\pm, U) = \frac12 S(\proj\pm, U) + \frac12 S(\proj2, U)
\end{align*}
and hence we must have
\begin{align*}
    S(\proj+, U) = S(\proj2, U) = S(\proj-, U),
\end{align*}
a contradiction.
\end{proof}

\begin{remark}
  A closely related three-dimensional construction is used in~\cite[Sec.~IV]{Aaronson} to show that for any indifferent hidden-variable theory, the assignment~$(\rho, U) \mapsto S(\rho,U)$ cannot be continuous in~$\rho$.
  For our unitary, the analogous pure-state construction would be to use~$\ket{\psi_\pm(\delta)} = \delta \ket\pm + \sqrt{1 - \delta^2} \ket2$.
  Alternatively, one can use the mixed states~$\rho_{\pm}(\delta) = \delta \proj\pm + (1-\delta) \proj 2$ for~$\delta\in(0,1)$.
  Both versions give the claimed discontinuity: $S(\rho_{\pm}(\delta), U)$ is given by the same expressions as in the proof of \cref{thm:decompinvcom}; therefore the two transition matrices have different limits as~$\delta \to 0$, yet~$\rho_{\pm}(\delta) \to \proj2$.
\end{remark}

\section*{Acknowledgments}
GM is supported by the European Research Council through an ERC Starting Grant (Grant agreement No.~101077455, ObfusQation) and funded by the Deutsche Forschungsgemeinschaft (DFG, German Research Foundation) under Germany's Excellence Strategy - EXC 2092 CASA - 390781972.
HN is supported by the European Union's Horizon Europe research and innovation programme under the Marie Sk\l{}odowska-Curie Actions (MSCA) Postdoctoral Fellowship, Grant Agreement No. 101212204 (AsympTensorPolytope), as well as Villum Fonden via the QMATH Centre of Excellence (Grant No.~10059).
MW acknowledges support by the European Union (ERC Grant SYMOPTIC, 101040907), by the Deutsche Forschungsgemeinschaft (DFG, German Research Foundation, 556164098), by the Deutsche Forschungsgemeinschaft under Germany's Excellence Strategy~--~EXC-2111~--~390814868, by the German Federal Ministry of Research, Technology and Space (QuSol, 13N17173), and by the Klaus Tschira Foundation.
He also thanks the Simons Institute for the Theory of Computing at UC Berkeley and Q-FARM and the Leinweber Institute for Theoretical Physics at Stanford University for hospitality.
Views and opinions expressed are those of the author(s) only and do not necessarily reflect those of the European Union or the European Research Council Executive Agency. Neither the European Union nor the granting authority can be held responsible for them.

\bibliographystyle{alpha}
\bibliography{references}

\appendix
\crefalias{section}{appendix}

\end{document}